\newcommand{\citet}[1]{\cite{#1}}
\newcommand{\citep}[1]{\cite{#1}}
\newcommand{\tst}{t}  %
\newcommand{\ts}{r} %
\newcommand{\signA}{{\bf{}1}^\star}
\newcommand{\signB}{{\bf{}0}^\star}
\newcommand{\vecu}{u}
\newcommand{\italicparagraph}[1]{\medskip%
\noindent\textit{#1}}
\title{Interactive Inference under Information Constraints}
\begin{document}

\author{
  \IEEEauthorblockN{Jayadev Acharya\IEEEauthorrefmark{1},}
  \IEEEauthorblockN{Cl\'{e}ment L. Canonne,\IEEEauthorrefmark{2}}
  \IEEEauthorblockN{Yuhan Liu,\IEEEauthorrefmark{1}}
  \IEEEauthorblockN{Ziteng Sun,\IEEEauthorrefmark{1}}
  and \IEEEauthorblockN{Himanshu Tyagi\IEEEauthorrefmark{3}}
}

\maketitle 
{
\renewcommand{\thefootnote}{}
  \footnotetext{\IEEEauthorblockA{\IEEEauthorrefmark{1}Cornell University. Emails: \{acharya, yl2976, zs335\}@cornell.edu}\\
      \indent\IEEEauthorblockA{\IEEEauthorrefmark{2}University of Sydney. Email: clement.canonne@sydney.edu.au}\\
      \indent\IEEEauthorblockA{\IEEEauthorrefmark{3}The Department of Electrical Communication Engineering,
        Indian Institute of Science, Bangalore 560012, India. Email: htyagi@iisc.ac.in}

 }
\renewcommand{\thefootnote}{\arabic{footnote}}
\setcounter{footnote}{0}

\maketitle

\begin{abstract}
  We study the role of interactivity in distributed statistical inference under information constraints, e.g., communication constraints and local differential privacy. We focus on the tasks of goodness-of-fit testing and estimation of discrete distributions. From prior work, these tasks are well understood under noninteractive protocols. Extending these approaches directly for interactive protocols is difficult due to correlations that can build due to interactivity; in fact, gaps can be found in prior claims of tight bounds of distribution estimation using interactive protocols.
We propose a new approach to handle this correlation and establish a unified method to establish lower bounds for both tasks. As an application, we obtain optimal bounds for both estimation and testing under local differential privacy and communication constraints. We also provide an example of a natural testing problem where interactivity helps.

%
%
%
%
%
%

%
%
%
%
%
%

\end{abstract}

\section{Introduction}
  \label{sec:intro}
  Classical statistics focuses on algorithms that are data-efficient. 
Recent years have seen revived interest in a different set of constraints for distributed
statistics: local constraints on the amount of \emph{information} that can be extracted from each \newest{data point}. These local constraints can be communication 
constraints, where each \newest{data point} must be expressed
using a fixed number of bits; privacy constraints, where each user
holding a sample seeks to reveal \newer{as little as possible} about \newest{it}; 
and many others, such as noisy communication
channels, limited types of measurements, or quantization schemes. Our focus in this work
is on statistical inference under such local constraints, when
interactive protocols are allowed.

We study the strengths and limitations of
interactivity for statistical inference under local information
constraints for two fundamental inference tasks for discrete distributions:
\emph{learning} (density estimation) and \emph{identity testing} (goodness-of-fit) \newest{under total variation distance}.
For these tasks, prior work 
gives a good understanding of the number of samples needed in
noninteractive setting, including a precise dependence on the 
information constraints under consideration. However, the following question
remains largely open:

\begin{quote}\it
  Does interactivity help for learning and testing in total variation distance when the data is
  subject to local information constraints, and, if so, for which type
  of constraints? 
\end{quote}

In this work, we resolve this question by establishing lower bounds
that hold for general channel families (modeling local information constraints). 
We show that interaction does not help for learning and testing under
communication constraints or local privacy constraints. Several prior works
have claimed a subset of these results, but we exhibit technical gaps in most of them (with the important exceptions of~\citet{AJM:20} and~\citet{BB:20}, which both obtain a tight bound
for testing under local privacy constraints). 
These gaps stem from
the difficulty in handling the correlation that builds due to interaction.
Our lower bound explicitly handles this correlation and is based on examining
how effectively one can 
exploit this correlation in spite of the local constraints.
 Furthermore, our lower bounds allow us to identify a family of channels
 for which interaction strictly helps in identity testing,
establishing the first separation between
interactive and noninteractive protocols for distributed
goodness-of-fit.

  \subsection{The setting}\label{sec:setting}
  We now describe the general framework of distributed inference under local
information constraints and then specialize it to two canonical tasks:
\newest{estimation and  testing}.

\begin{figure}[ht!]\centering
 \scalebox{.7}{\begin{tikzpicture}[->,>=stealth',shorten >=1pt,auto,node distance=20mm, semithick]
  \node[circle,draw,minimum size=13mm] (A) {$X_1$};
  \node[circle,draw,minimum size=13mm] (B) [right of=A] {$X_2$};
  \node[circle,draw,minimum size=13mm] (BB) [right of=B] {$X_3$};
  \node (C) [right of=BB] {$\dots$};
  \node[circle,draw,minimum size=13mm] (DD) [right of=C] {$X_{\ns-2}$};
  \node[circle,draw,minimum size=13mm] (D) [right of=DD] {$X_{\ns-1}$};
  \node[circle,draw,minimum size=13mm] (E) [right of=D] {$X_\ns$};
  
  \node[rectangle,draw,minimum width=13mm,minimum height=7mm,fill=gray!20!white] (WA) [below of=A] {$W_1$};
  \node[rectangle,draw,minimum width=13mm,minimum height=7mm,fill=gray!20!white] (WB) [below of=B] {$W_2$};
  \node[rectangle,draw,minimum width=13mm,minimum height=7mm,fill=gray!20!white] (WBB) [below of=BB] {$W_3$};
  \node[rectangle,minimum width=13mm,minimum height=7mm,fill=none] (WC) [below of=C] {$\dots$};
  \node[rectangle,draw,minimum width=13mm,minimum height=7mm,fill=gray!20!white] (WDD) [below of=DD] {$W_{\ns-1}$};
  \node[rectangle,draw,minimum width=13mm,minimum height=7mm,fill=gray!20!white] (WD) [below of=D] {$W_{\ns-1}$};
  \node[rectangle,draw,minimum width=13mm,minimum height=7mm,fill=gray!20!white] (WE) [below of=E] {$W_\ns$};
  
  \node[draw,dashed,fit=(WA) (WB) (WBB) (WC) (WDD) (WD) (WE)] {};
  \node[circle,draw,minimum size=13mm,fill=white] (YA) [below of=WA] {$Y_1$};
  \node[circle,draw,minimum size=13mm,fill=white] (YB) [below of=WB] {$Y_2$};
  \node[circle,draw,minimum size=13mm,fill=white] (YBB) [below of=WBB] {$Y_3$};
  \node[circle,minimum size=13mm] (YC) [below of=WC] {$\dots$};
  \node[circle,draw,minimum size=13mm,fill=white] (YDD) [below of=WDD] {$Y_{\ns-2}$};
  \node[circle,draw,minimum size=13mm,fill=white] (YD) [below of=WD] {$Y_{\ns-1}$};
  \node[circle,draw,minimum size=13mm,fill=white] (YE) [below of=WE] {$Y_\ns$};
  
  \begin{scope}[on background layer]
  \draw[->] (YA) edge[densely dotted, thick] (WB);
  \draw[->] (YA) edge[densely dotted, thick] (WBB);
  \draw[->] (YA) edge[densely dotted, thick] (WDD);
  \draw[->] (YA) edge[densely dotted, thick] (WE);
  \draw[->] (YB) edge[densely dotted, thick] (WBB);
  \draw[->] (YB) edge[densely dotted, thick] (WDD);
  \draw[->] (YB) edge[densely dotted, thick] (WD);
  \draw[->] (YB) edge[densely dotted, thick] (WE);
  \draw[->] (YBB) edge[densely dotted, thick] (WDD);
  \draw[->] (YBB) edge[densely dotted, thick] (WD);
  \draw[->] (YBB) edge[densely dotted, thick] (WE);
  \draw[->] (YDD) edge[densely dotted, thick] (WD);
  \draw[->] (YDD) edge[densely dotted, thick] (WE);
  \draw[->] (YD) edge[densely dotted, thick] (WE);
  \end{scope}
  
  \node (P) [above of=C] {$\p$};
  \node[rectangle,draw, minimum size=10mm] (R) [below of=YC] {Server};
  \node (out) [below of=R,node distance=13mm] {output};

  \draw[->] (P) edge[densely dashed,bend right=10] (A)(A) edge (WA)(WA) edge (YA)(YA) edge[bend right=10] (R);
  \draw[->] (P) edge[densely dashed,bend right=5] (B)(B) edge (WB)(WB) edge (YB)(YB) edge[bend right=5] (R);
  \draw[->] (P) edge[densely dashed] (BB)(BB) edge (WBB)(WBB) edge (YBB)(YBB) edge (R);
  \draw[->] (P) edge[densely dashed] (DD)(DD) edge (WDD)(WDD) edge (YDD)(YDD) edge (R);
  \draw[->] (P) edge[densely dashed,bend left=5] (D)(D) edge (WD)(WD) edge (YD)(YD) edge[bend left=5] (R);
  \draw[->] (P) edge[densely dashed,bend left=10] (E)(E) edge (WE)(WE) edge (YE)(YE) edge[bend left=10] (R);
  \draw[->] (R) edge (out);
\end{tikzpicture}}
\caption{The information-constrained distributed model. In the
private-coin setting the channels $W_1,\dots,W_\ns$ are independent,
while in the public-coin setting they are jointly randomized, and in
the interactive setting $W_\tst$ can also depend on the previous
messages $Y_1, \ldots, Y_{\tst-1}$ (dotted, \newest{upwards} arrows).} 
\label{fig:model}
\end{figure}
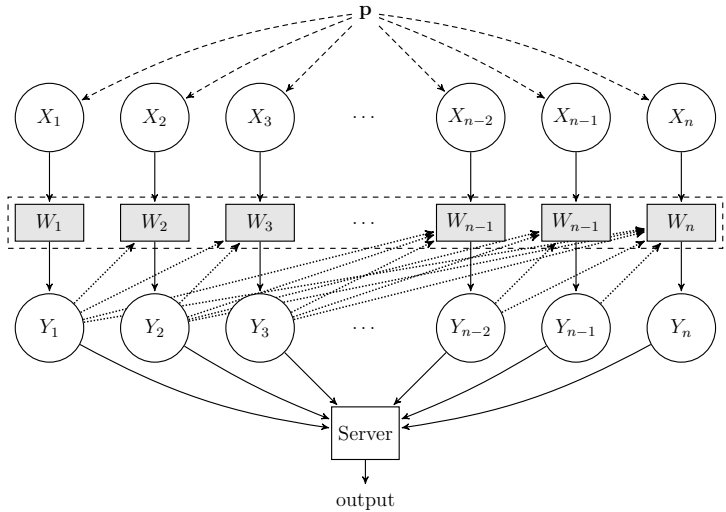

The general setting is captured in~\cref{fig:model}. There are $\ns$
users, each of which observes an independent sample from an unknown
distribution $\p$ over $[2\ab]=\{1,2,\dots,2\ab\}$.\footnote{For
convenience, we assume throughout the paper that the domain $\cX$ has
even cardinality; specifically $\cX=[2\ab]$. This is merely for the ease
of notation, and all results apply to any finite domain
$\cX$.} Each
user is constrained in the amount of information they can
reveal about their input. This constraint for user $\tst$ is described
by a channel $W_\tst\colon[2\ab]\to\cY$, which 
is a randomized
function from $[2\ab]$ to the message space $\cY$.\footnote{Throughout, we use
the information-theoretic notion of a channel
and
use the standard notation $W(y\mid x)$
for the probability with which the output is $y$
when the input is $x$.}
In general, we will
consider a set of channels $\cW$ from which each user's channel must
be selected; this family of ``allowed channels'' models the local
information constraints under consideration.
This is a very general
setting, which captures communication and local privacy constraints as
special cases, as we elaborate next.
\begin{description}
  \item[Communication constraints.] Let $\cW_\numbits\eqdef \{W\colon[2\ab]\to\{0,1\}^\numbits\}$ be the set of channels whose output alphabet $\cY$ is the set of all $\numbits$-bit strings. This captures the constraint where the message from each user is at most $\numbits$ bits: that is, each user has a stringent bandwidth constraint. 
  
  \item[Local differential privacy constraints.] For a privacy parameter $\priv>0$, a channel $W\colon [2\ab]\to\{0,1\}^\ast$ is \emph{$\priv$-locally differentially private}~\cite{EvfimievskiGS:03,DMNS:06,KLNRS:11} if 
\[
\frac{W(y\mid x_1)}{W(y\mid x_2)} \leq
  e^{\priv}, \quad\forall x_1,x_2\in[\newest{2}\ab], \forall
  y\in \{0,1\}^\ast.\,
\]
Loosely speaking, no output message from a user can reveal too much about their sample. We denote by $\cW_\priv$ the set of all $\priv$-locally differentially private ($\priv$-LDP) channels.
\end{description}

We emphasize that, although these two constraints will be
our leading examples, our formulation of local information
constraints captures many more settings. As an example, choosing
message output $\cY = [2\ab]\cup\{\bot\}$ and $\cW$ to be the set
$W\colon[2\ab]\to\cY$ of the form $W(x\mid x) = \eta_x$, $W(\bot\mid
x) = 1-\eta_x$ for various sequences $(\eta_x)_{x\in[2\ab]}$ lets one
model \emph{erasure channels}.
As another example,
\hmargin{Removed the total order comment. It is clear that the elements of $[2\ab]$
are ordered. }
one can choose $\cY=\{0,1\}$, and
let $\cW$ to be the set of channels of the form $W(1\mid x)
= \indic{x\leq \tau}$, \ie{} of \emph{threshold measurements}.
\smallskip

We now return to the description of distributed inference protocols under local information constraints described by $\cW$. Once the channel $W_\tst\in\cW$ at user $\tst$ is decided, the message of user $\tst$ is $y\in\cY$ with 
probability $W_\tst(y\mid X_\tst)$. The transcript of $\ns$ messages,
$Y^\ns=(Y_1,\dots, Y_\ns)$, 
is observed by a server $\mathcal{R}$, whose goal is to perform some
inference task based on the messages.
We consider three classes of protocols,
classified depending on how the channels are allowed to be chosen.\footnote{In what follows, ``SMP'' stands for
  \emph{simultaneous-message passing}, \ie{} for noninteractive,
  one-shot protocol.}

\begin{description}
\item[Private-coin noninteractive (SMP) protocols.]  Let $U_1,
  \dots, U_\ns$ be independent random variables which are independent jointly of $(X_1, \dots,
  X_\ns)$. $U_t$ is available \newest{only} at user $\tst$ and $W_\tst$ is chosen as a
  function of $U_\tst$. Therefore, the outputs of the channels are independent of
  each other.

\item[Public-coin noninteractive (SMP) protocols.] Let $U$ be a random
  variable independent of $(X_1, \dots, X_\ns)$. All users are given access to $U$,
  and they select their respective channels $W_\tst\in \cW$ as a
  function of $U$. We note that the outputs of the channels are
  independent given $U$.  %
 
\item[Sequentially interactive protocols.] Let $U$ be a random
  variable independent of $(X_1, \dots, X_\ns)$. In an
  \emph{interactive} protocol, all users are given access to $U$, and
  user $\tst$ selects their respective channel $W_\tst\in \cW$ as a
  function of $(Y_1, \ldots, Y^{\tst-1}, U)$. We will often make this
  dependence on previous messages explicit by writing $W^{Y^\tst,U}$
  or as $W^{Y^\tst}$ when $U$ is fixed \newest{(see~\cref{sec:preliminaries})}.   
\end{description}

Henceforth, we will interchangeably use ``interactive'' and
``sequentially interactive,'' and will often omit to specify
``noninteractive'' when mentioning public- and private-coin protocols.
\newest{Note that private-coin protocols are a subset of public-coin protocols which in turn are a subset of interactive protocols}.

We now define information-constrained discrete distribution estimation and uniformity testing. For a discrete domain $\cX$, let $\distribs{\cX}$ be the simplex of distributions over $\cX$. Throughout this paper we consider $\cX=[2\ab]$, \newest{and denote $\distribs{[2\ab]}$ by $\Delta_{2\ab}$}.
\begin{description}
  \item[Distribution learning.] In the $(2\ab, \dst)$-distribution learning problem (under constraints $\cW$), we 
    seek to estimate an unknown distribution $\p$ over \new{$\cX=[2\ab]$} to within
    $\dst$ in total variation distance 
    (defined in~\cref{eqn:distances}). Formally, a protocol
    $\Pi\colon[2\ab]^\ns \times \cU \to\cY^\ns$ (using $\cW$) and an estimator mapping
    $\hat{\p}\colon\cY^\ns\times \cU \to \distribs{2\ab}$ constitute an
    $(\ns,\dst)$-estimator using $\cW$ if
\begin{equation}
    \label{eq:def:learning}
\sup_{\p\in\distribs{2\ab}}\probaDistrOf{X^\ns\sim \p}{\totalvardist{\hat{\p}(Y^\ns, U)}{\p}>\dst} \leq \frac{1}{100},
\end{equation}
where \newer{$Y^\ns=\Pi(X^\ns, U)$} and $\totalvardist{\p}{\q}$ denotes the total variation distance between $\p$
and $\q$. Namely, given the transcript $(Y^\ns,U)$ of the protocol $\Pi$ run on the samples $X^\ns$, 
$\hat{\p}$ estimates the input distribution $\p$ to within distance
$\dst$ with probability at least $99/100$ (this choice of
probability is arbitrary and has been chosen for convenience \newer{in the proof of~\cref{lem:paninski:estimation}}).
The \emph{sample complexity} of $(\ab, \dst)$-distribution learning using $\cW$ is then 
the least $\ns$ such that there exists an $(\ns,\dst)$-estimator using $\cW$.

  \item[Identity and uniformity testing.] In the $(2\ab, \dst)$-identity testing problem (under constraints $\cW$), given a known reference distribution $\q$ over $[2\ab]$, and samples from an unknown $\p$, we seek to test if $\p=\q$ or if it is $\dst$-far
from $\q$ in total variation distance. Specifically, an
$(\ns, \dst)$-test using $\cW$ is given by a protocol
$\Pi\colon[2\ab]^\ns\times \cU \to\cY^\ns$ (using $\cW$) and a randomized decision function $T\colon \cY^\ns\times \cU\to \{0,1\}$ such that
\begin{equation}
\probaDistrOf{X^\ns\sim \q^\ns}{T(Y^\ns, U)=0}\geq \frac{99}{100}\,,\qquad
\inf_{\p: \totalvardist{\p}{\q} \geq \dst}  \probaDistrOf{X^\ns\sim \p^\ns} {T(Y^\ns, U)=1} \geq \frac{99}{100},
\end{equation}
where $Y^\ns =\Pi(X^\ns, U)$. 
In other words, after running the protocol $\Pi$ on independent samples $X^\ns$ and public coins $U$,
a decision function  $T$ is applied to the transcript $(Y^\ns, U)$ of the protocol. Overall, the protocol 
should ``accept'' with high constant probability if the samples come
from the reference distribution $\q$ and ``reject'' with high constant
probability if they come from a distribution significantly far from
$\q$. Once again, note that the choice of $1/100$ for probability of
error is for convenience.\footnote{In other words, we seek to solve the composite hypothesis
testing problem with null hypothesis $\mathcal{H}_0 = \{\q\}$ and
composite alternative given by $\mathcal{H}_1
= \setOfSuchThat{ \q'\in\distribs{2\ab} }{ \totalvardist{\q'}{\q}
\geq \dst}$ in a minimax setting, with both
type-I and type-II errors set to $1/100$.} Identity testing for the
uniform reference distribution $\uniform$ over $[2\ab]$ is termed the
$(2\ab, \dst)$-\emph{uniformity testing} problem, and the \emph{sample complexity} of $(2\ab, \dst)$-uniformity testing using $\cW$ is the least
$\ns$ for which there exists an $(\ns,\dst)$-test using $\cW$ for $\uniform$.
\end{description}

\begin{remark}
  We note that our results are phrased in terms of sample complexity, \ie{} the number of users required to perform the corresponding task. Equivalently, this corresponds to minimax lower bounds on rates of convergence (for estimation) or critical radius (for testing).
\end{remark}

  \subsection{Our results}
    \label{ssec:results}

The lower bounds we develop associate to each channel  $W\colon[2\ab]\to\cY$ a
$\ab$-by-$\ab$ positive semidefinite matrix $H(W)$, \newest{which we term the \emph{channel information matrix}}
(see~\cref{eq:channel:matrix}), which captures the ``informativeness''
of the channel $W$. The spectrum of these matrices $H(W)$,
for $W\in\cW$, will play a central role in our results.
In particular,
for a given family of local constraints $\cW$, the
following quantities will be used:
\begin{align*}
\norm{\cW}_{\rm op} &\eqdef \max_{W\in \cW} \norm{H(W)}_{\rm
op}, \tag{maximum operator norm}
\\
\norm{\cW}_{\ast}
&\eqdef \max_{W\in \cW} \norm{H(W)}_{\ast},\tag{maximum nuclear norm}
\\
\norm{\cW}_{F} &\eqdef \max_{W\in \cW} \norm{H(W)}_{F}. \tag{maximum Frobenius norm} 
\end{align*} 
Two key inequalities to interpret our results are  
\begin{equation}\label{eqn:holder}
    \norm{\cW}_{F}^2 \leq \norm{\cW}_{\rm op} \norm{\cW}_{\ast}
  \text{ and }
    \norm{\cW}_{\rm op}\leq \norm{\cW}_{F} \leq \norm{\cW}_{\ast},
\end{equation}
which follow from H\"older's inequality and monotonicity of norms, respectively.

Our results are summarized in~\cref{table:results:lowerbounds}; we
describe and discuss them in more detail below.   

\renewcommand{\arraystretch}{1.75}
\begin{table}[htb!]%
\caption{Lower bounds for local information-constrained learning and
testing. The public- and private-coin bounds were known from previous
work; the interactive bounds all follow from our results. The bound marked by a $(\dagger)$ was previously established in~\cite{BB:20,AJM:20}.
}
\label{table:results:lowerbounds} 
\begin{adjustwidth}{-1cm}{-1cm}
\centering
\begin{tabular}{|c|>{\columncolor{red!10}} c|>{\columncolor{blue!10}} c|>{\columncolor{yellow!10}} c|>{\columncolor{red!10}} c|>{\columncolor{blue!10}} c|>{\columncolor{yellow!10}} c| }
\hline
 & \multicolumn{3}{c|}{Learning}
  & \multicolumn{3}{c|}{Testing} \\\hline & \!\!Private-Coin\!\! &\!\! Public-Coin\! \! &\!\! Interactive\!\!
  & Private-Coin & Public-Coin & Interactive\\\hline\hline 
  General
  & \multicolumn{3}{c|}{\cellcolor{yellow!10}$\frac{\ab}{\dst^2}\cdot \frac{\ab}{\norm{\cW}_{\ast}}$}
  & $\frac{\sqrt{\ab}}{\dst^2}\cdot \frac{\ab}{\norm{\cW}_{\ast}}$
  & $\frac{\sqrt{\ab}}{\dst^2}\cdot \frac{\sqrt{\ab}}{\norm{\cW}_{F}}$
  & $\frac{\sqrt{\ab}}{\dst^2}\cdot \frac{\sqrt{\ab}}{\sqrt{\norm{\cW}_{\ast}\norm{\cW}_{\rm op}}}$
\\\hline
 \!\! Communication\!\!
  & \multicolumn{3}{c|}{\cellcolor{yellow!10}$\frac{\ab}{\dst^2}\cdot \frac{\ab}{2^\numbits}$}
  & $\frac{\sqrt{\ab}}{\dst^2} \cdot \frac{\ab}{2^{\numbits}}$ 
  & $\frac{\sqrt{\ab}}{\dst^2}\cdot \sqrt{\frac{\ab}{2^{\numbits}}}$
  &$\frac{\sqrt{\ab}}{\dst^2}\cdot \sqrt{\frac{\ab}{2^{\numbits}}}$ \\\hline
  Privacy
  & \multicolumn{3}{c|}{\cellcolor{yellow!10}$\frac{\ab}{\dst^2}\cdot \frac{\ab}{\priv^2}$}
  & $\frac{\sqrt{\ab}}{\dst^2} \cdot \frac{\ab}{\priv^2}$ 
  & $\frac{\sqrt{\ab}}{\dst^2}\cdot \frac{\sqrt{\ab}}{\priv^2}$ 
  & \hphantom{\small$(\dagger)$~~~~}$\frac{\sqrt{\ab}}{\dst^2} \cdot \frac{\sqrt{\ab}}{\priv^2}$~~~~\small$(\dagger)$\\\hline
    Leaky-Query
  & \multicolumn{3}{c|}{\cellcolor{yellow!10}$\frac{\ab}{\dst^2}\cdot \sqrt{\ab}$}
  & $\frac{\sqrt{\ab}}{\dst^2} \cdot \sqrt{\ab}$ 
  & $\frac{\sqrt{\ab}}{\dst^2}\cdot \sqrt{\ab}$ 
  & $\frac{\sqrt{\ab}}{\dst^2} \cdot\sqrt[4]{\ab}$\\\hline
\end{tabular}
\end{adjustwidth}
\end{table}

\paragraph{Learning} Our first result concerns distribution learning. We establish a new
technical lemma which relates the mutual information between the
parameters of the distribution to learn and the (adaptively chosen)
messages sent by the users to the nuclear norm
$\norm{\cW}_{\ast}$ of the local constraints (\cref{thm:total}). This
key result, combined with an Assouad-type bound for interactive
protocols, yields the following:

\begin{theorem}
  \label{theo:interactive:lb:learning}
The sample complexity of $(2\ab,\dst)$-distribution learning under local constraints $\cW$ using interactive protocols is 
    \[
        \bigOmega{\frac{\ab^2}{\dst^2\norm{\cW}_{\ast}}}.
    \]
\end{theorem}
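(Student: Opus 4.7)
The plan is to combine an Assouad-type reduction with the nuclear-norm mutual information bound announced in \cref{thm:total}.

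First, I would instantiate the classical Paninski perturbation family: partition $[2\ab]$ into $\ab$ pairs $\{2i-1, 2i\}$, and for each $z \in \{-1,+1\}^{\ab}$ define $\p_z \in \Delta_{2\ab}$ by placing mass $(1+z_i \etaa)/(2\ab)$ on $2i-1$ and $(1-z_i \etaa)/(2\ab)$ on $2i$, with the perturbation $\etaa = \Theta(\dst)$ chosen so that $\totalvardist{\p_z}{\p_{z'}} = \etaa \cdot d_H(z,z')/\ab$. Any $(\ns,\dst)$-estimator $\hat{\p}$ that achieves total variation error at most $\dst$ can then be rounded to the nearest member of the family to produce an estimator $\hat Z \in \{-1,+1\}^\ab$ of $Z$: the triangle inequality forces $d_H(\hat Z, Z) \le 2\dst \ab/\etaa$, which for an appropriate choice of $\etaa$ is a small constant fraction of $\ab$ with probability $99/100$.

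Second, I would apply Assouad's lemma in its mutual-information form. Drawing $Z$ uniformly at random from $\{-1,+1\}^{\ab}$, one has $d_H(\hat Z, Z) \ge \sum_{i=1}^{\ab} \indic{\hat Z_i \neq Z_i}$, and a standard argument lower bounds the expected Hamming error by $\sum_{i=1}^{\ab}(1-\totalvardist{P_{+i}}{P_{-i}})/2$, where $P_{\pm i}$ is the law of the transcript $(Y^\ns,U)$ conditioned on $Z_i = \pm 1$. Pinsker's inequality together with Jensen's inequality then shows that a successful estimator forces $\sum_{i=1}^{\ab} I(Z_i ; Y^\ns, U) = \Omega(\ab)$ under the uniform prior on $Z$.

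Third, I would invoke \cref{thm:total} to upper bound this sum. Expanding each $I(Z_i; Y^\ns, U)$ by the chain rule into per-step contributions $I(Z_i; Y_t \mid Y^{t-1}, U)$ and applying the cited lemma yields an aggregate bound of the form $\sum_{i=1}^{\ab} I(Z_i; Y^\ns, U) = O(\ns \cdot \dst^2 \norm{\cW}_{\ast}/\ab)$ for any interactive protocol using $\cW$. Setting this at least $\Omega(\ab)$ gives $\ns = \Omega(\ab^2/(\dst^2 \norm{\cW}_{\ast}))$, as claimed.

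The main obstacle, and the point at which the noninteractive argument breaks, is controlling the mutual information when the channel $W_t$ at step $t$ is itself a random function of $Y^{t-1}$ and may therefore be correlated in complicated ways with the conditional distribution of $Z$ given the past. The saving feature is that $\norm{H(W)}_{\ast} \le \norm{\cW}_{\ast}$ uniformly over $W \in \cW$; applying this pointwise bound inside the conditional expectation at each step of the chain-rule expansion absorbs the adaptive choice of $W_t$ without requiring any independence between the channel realization and the past. This uniform spectral control, encapsulated by the channel information matrix $H(W)$, is precisely what lets the argument sidestep the interactive correlations that were mishandled in several earlier proofs.
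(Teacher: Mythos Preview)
Your proposal is correct and follows essentially the same route as the paper: Paninski family, Assouad-type reduction to a per-coordinate mutual information lower bound $\sum_i \condmutualinfo{Z_i}{Y^\ns}{U}=\Omega(\ab)$, and then \cref{thm:total} for the matching upper bound. The only cosmetic difference is in the Assouad step: the paper passes from small Hamming loss to large information via per-coordinate Fano (\cref{lem:information-Hamming-loss}), whereas you go through the Le Cam form $\sum_i(1-\totalvardist{P_{+i}}{P_{-i}})/2$ and then Pinsker; both are standard and yield the same $\Omega(\ab)$ conclusion.
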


This bound matches the known lower bound for learning with noninteractive 
\emph{private-coin} protocols in~\citet{ACT:18:IT1}. 

\italicparagraph{LDP and communication-limited learning.} We now apply this to \newer{local differential} privacy (LDP) and communication constraints. While bounds for these two constraints were presented in prior work, the proofs unfortunately break down for interactive protocols (see~\cref{ssec:previous}).
\label{ssec:learning:ldp:communication}

\begin{corollary}
  \label{coro:learning:lb:ldp}
Let $\priv\in(0,1]$. The sample complexity of interactive  $(2\ab,\dst)$-distribution learning under $\priv$-LDP channels $\cW_\priv$ is 
    \[
     \bigOmega{\frac{\ab^2}{\dst^2\priv^2 }}.
    \]
\end{corollary}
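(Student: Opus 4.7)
The plan is to obtain~\cref{coro:learning:lb:ldp} as an immediate specialization of~\cref{theo:interactive:lb:learning} to the family $\cW = \cW_\priv$ of $\priv$-LDP channels. Since the theorem gives the general lower bound $\bigOmega{\ab^2/(\dst^2\norm{\cW}_{\ast})}$, it suffices to establish
\[
  \norm{\cW_\priv}_{\ast}
  = \max_{W\in\cW_\priv}\norm{H(W)}_{\ast}
  = O(\priv^2)
  \qquad \text{for every } \priv\in(0,1].
\]
Plugging this bound into the theorem yields exactly $\bigOmega{\ab^2/(\dst^2\priv^2)}$, which is the claim.

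The core of the proof is therefore the nuclear-norm bound above, which I would establish directly from the defining LDP inequality $W(y\mid x)/W(y\mid x') \le e^\priv$, valid for all inputs $x,x'$ and outputs $y$. This inequality has two useful consequences for $\priv \in (0,1]$: the pointwise estimate $(W(y\mid x) - W(y\mid x'))^2 \le (e^\priv - 1)^2 W(y\mid x) W(y\mid x') = O(\priv^2)\, W(y\mid x)\, W(y\mid x')$ on differences of conditional masses, and the lower bound $\mu(y) \ge e^{-\priv} W(y\mid x)$ uniformly in $x$ on the uniform mixture $\mu(y) = \frac{1}{2\ab}\sum_x W(y\mid x)$. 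Since $H(W)$ is an $\ab\times\ab$ positive semidefinite matrix, its nuclear norm equals its trace, which one can then bound by substituting these two LDP estimates into the quadratic-form expression defining $H(W)$ and summing over $y$ via $\sum_y W(y\mid x) = 1$.

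The main obstacle is that both LDP estimates must be used in tandem: a naive argument that exploits the LDP inequality only in the numerator, or only in the denominator, would lose a factor of the domain size $\ab$ and produce only the weaker bound $O(\ab\priv^2)$. The two-sided cancellation---between the dimension counted by the trace sum over coordinate pairs and the $1/\ab$ arising inside $\mu(y)$---is what restores the dimension-free $O(\priv^2)$ behavior, and is the only place where the specific structure of $\cW_\priv$ (as opposed to a generic channel family) enters the argument. Once this bound is established, the corollary follows in one line from~\cref{theo:interactive:lb:learning}.
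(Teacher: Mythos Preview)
Your proposal is correct and takes exactly the same route as the paper: apply~\cref{theo:interactive:lb:learning} and invoke the bound $\norm{\cW_\priv}_\ast = O(\priv^2)$. The only difference is that the paper simply cites this nuclear-norm bound from prior work~\cite[Lemma~V.5]{ACT:18:IT1}, whereas you sketch a direct proof of it; your sketch is sound (the trace computation with the two LDP estimates indeed gives $\operatorname{Tr} H(W) \le e^\priv(e^\priv-1)^2/2 = O(\priv^2)$, and the $2\ab$ factor from the denominator is precisely what cancels the sum over $i\in[\ab]$).
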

\begin{proof}
  This follows from $ \norm{\cW_\priv}_{\ast} = O(\priv^2)$, which was seen in \cite[Lemma~V.5]{ACT:18:IT1}.
\end{proof}
\begin{corollary}
  \label{coro:learning:lb:comm}
For $1\leq \numbits\leq \log\ab$,  the sample complexity of interactive $(2\ab,\dst)$-distribution learning under communication constraints $\cW_\numbits$  is
\[
        \bigOmega{\frac{\ab^2}{\dst^2 2^{\numbits} }}.
    \]
\end{corollary}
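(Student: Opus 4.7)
The plan is to mirror the proof of \cref{coro:learning:lb:ldp} by applying \cref{theo:interactive:lb:learning} directly, reducing the claim to a bound on the nuclear norm of the channel family: it suffices to establish
\[
  \norm{\cW_\numbits}_{\ast} = O(2^\numbits),
\]
uniformly over all $W\in\cW_\numbits$. Substituting this into the general bound $\Omega(\ab^2/(\dst^2\norm{\cW}_{\ast}))$ from \cref{theo:interactive:lb:learning} then yields the claimed $\Omega(\ab^2/(\dst^2\, 2^\numbits))$.

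To prove the nuclear norm bound, I would use the definition of the channel information matrix $H(W)$ in Eq.~\eqref{eq:channel:matrix}. Since $H(W)$ is positive semidefinite, its nuclear norm equals its trace. The definition decomposes $H(W)$ as a sum of nonnegative contributions indexed by output symbols $y\in\{0,1\}^\numbits$, so a per-symbol estimate bounding each contribution by an absolute constant (independent of $\ab$ and of $W$) yields $\trace(H(W)) \leq O(|\cY|)\leq O(2^\numbits)$. This is the communication-constrained analogue of \cite[Lemma~V.5]{ACT:18:IT1}, whose LDP version powers \cref{coro:learning:lb:ldp}: there, a bound of $O(\priv^2)$ per symbol comes from the bounded likelihood ratio, whereas here the bound is $O(1)$ per symbol and the scaling comes purely from the alphabet size $|\cY|\leq 2^\numbits$.

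The regime restriction $1\leq \numbits \leq \log\ab$ is exactly the range in which the resulting lower bound exceeds the unconstrained $\Omega(\ab/\dst^2)$ sample complexity of distribution learning and is therefore nontrivial; outside this range, the channel can essentially transmit the full sample and no improvement over the classical bound is expected.

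The only step requiring any real work is the per-symbol constant bound on the trace contribution, which is a direct computation from the form of $H(W)$ together with Cauchy--Schwarz. The interactive aspect of the protocols is absorbed entirely into \cref{theo:interactive:lb:learning} and plays no further role in this corollary, since $\norm{\cW}_{\ast}$ depends only on each individual channel in the family.
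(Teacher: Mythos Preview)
Your proposal is correct and follows essentially the same route as the paper: apply \cref{theo:interactive:lb:learning} and use the nuclear-norm bound $\norm{\cW_\numbits}_\ast \leq 2^\numbits$. The paper simply cites this bound from \cite[Lemma~V.1]{ACT:18:IT1} rather than reproving it, whereas you sketch the argument (trace of the p.s.d.\ matrix $H(W)$, with each output symbol $y$ contributing at most a constant since $\sum_i (W(y\mid 2i-1)-W(y\mid 2i))^2 \leq \sum_x W(y\mid x)$); both lead to the identical conclusion.
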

\begin{proof}
  This follows from $\norm{\cW_\numbits}_{\ast} \leq 2^\numbits$, which was seen in \cite[Lemma~V.1]{ACT:18:IT1}.
\end{proof}

Both~\cref{coro:learning:lb:ldp,coro:learning:lb:comm} are optimal up to constant factors. In fact there exist noninteractive private-coin protocols that achieve these bounds (see references in~\cref{ssec:previous}), showing that for learning with communication and LDP constraints, interactive protocols are no more powerful than noninteractive ones. 

\italicparagraph{Learning under $\lp[2]$ distance.} Finally, we note that one can instantiate the distribution learning question in~\cref{eq:def:learning} with other distance measures than total variation, \eg{} the $\lp[2]$ distance defined by $\lp[2](\p_1,\p_2) = \normtwo{\p_1-\p_2}$. Our results on total variation distance readily imply the following corollary for $\lp[2]$, which retrieves the two lower bounds from~\cite{BCO:20, BHO:19} and matches the bounds of~\cite{DJW:13,Bassily:19} for LDP in the noninteractive case.
\begin{corollary}
  \label{cor:l2:learning}
For $1\leq \numbits\leq \log\ab$ and $\priv\in(0,1]$, the sample complexities of interactive $(2\ab,\dst)$-distribution learning in $\lp[2]$ distance under constraints $\cW_\numbits$ and $\cW_{\priv}$ are
\[
        \bigOmega{\frac{\ab}{\dst^2 2^{\numbits}}\land \frac{1}{\dst^4 2^{\numbits}}} \text{ ~~~and~~~~} \bigOmega{\frac{\ab}{\dst^2 \priv^2}\land \frac{1}{\dst^4 \priv^2}},
\]
respectively.   
\end{corollary}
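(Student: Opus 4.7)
The plan is to reduce $\lp[2]$ learning to total variation learning via Cauchy--Schwarz, using~\cref{coro:learning:lb:comm} (for $\cW_\numbits$) and~\cref{coro:learning:lb:ldp} (for $\cW_\priv$) as black-box lower bounds. The key inequality is that for any two distributions $\p, \q$ supported on a set of size $k$, one has $\totalvardist{\p}{\q} \leq \tfrac{\sqrt{k}}{2}\,\lp[2](\p,\q)$. Thus, given an interactive $(\ns, \dst)$-estimator in $\lp[2]$ distance on $[2\ab]$, restricting to inputs supported on a sub-domain $[k]\subseteq[2\ab]$ and projecting the output onto the sub-simplex $\Delta_k$ yields an interactive TV-estimator on $[k]$ of error $\tfrac{\sqrt{k}}{2}\dst$, using the same number of samples and the same channel family (the families $\cW_\numbits$ and $\cW_\priv$ are determined by the output alphabet, resp.\ privacy parameter, and any channel on $[k]$ trivially extends to $[2\ab]$, so a TV-lower bound on the sub-domain transfers to an $\lp[2]$-lower bound on $[2\ab]$).

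I would then choose the sub-domain size $k = \min\bigl(2\ab,\, \lfloor c/\dst^2 \rfloor\bigr)$ for a suitably small constant $c > 0$. This choice ensures that the induced TV error $\tfrac{\sqrt{k}}{2}\dst \leq \tfrac{\sqrt{c}}{2}$ is a small constant, so~\cref{coro:learning:lb:comm} on domain $[k]$ gives a sample-complexity lower bound of $\Omega(k^2/2^\numbits)$, and similarly $\Omega(k^2/\priv^2)$ via~\cref{coro:learning:lb:ldp}. Substituting the two cases of $k$ separately gives the two terms: when $\dst \leq \Theta(1/\sqrt{\ab})$ one has $k = 2\ab$, yielding $\Omega(\ab^2/2^\numbits) = \Omega(\ab/(\dst^2 2^\numbits))$; when $\dst \geq \Omega(1/\sqrt{\ab})$ one has $k = \Theta(1/\dst^2)$, yielding $\Omega(1/(\dst^4 2^\numbits))$. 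These two regimes are complementary and combine into the claimed $\min$ (the same case analysis works verbatim with $\priv^2$ in place of $2^\numbits$ for the LDP bound).

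The one step needing a small amount of care is the projection onto $\Delta_k$: given $\hat{\p} \in \Delta_{2\ab}$ output by the $\lp[2]$-learner, let $\hat{\p}_k$ be the $\lp[2]$-nearest point of the closed convex set $\Delta_k \subseteq \Delta_{2\ab}$. Non-expansiveness of orthogonal projection onto a closed convex set guarantees that $\lp[2](\hat{\p}_k, \p) \leq \lp[2](\hat{\p}, \p) \leq \dst$ for every $\p \in \Delta_k$, so the projected estimator is still a valid $\lp[2]$-learner on $[k]$ and is obtained by pure post-processing of the transcript (requiring no extra samples). Beyond this, the argument is a direct case analysis, and I do not anticipate any substantive obstacle: the proof really just packages two applications of~\cref{coro:learning:lb:comm,coro:learning:lb:ldp} through Cauchy--Schwarz with two different choices of effective support size.
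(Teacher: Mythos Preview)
Your approach is essentially the same as the paper's: reduce $\lp[2]$ learning to total variation learning via Cauchy--Schwarz, split into the two regimes $\dst\lesssim 1/\sqrt{\ab}$ (full domain) and $\dst\gtrsim 1/\sqrt{\ab}$ (sub-domain of size $\Theta(1/\dst^2)$), and invoke the TV lower bound on the appropriate domain. Your explicit handling of the projection onto $\Delta_k$ is a nice touch that the paper leaves implicit.

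There is, however, a computational slip in your Case~1. You first absorb the induced TV parameter $\dst'=\tfrac{\sqrt{k}}{2}\dst$ into a constant, obtaining the bound $\Omega(k^2/2^\numbits)$, and then substitute $k=2\ab$ to get $\Omega(\ab^2/2^\numbits)$, which you equate with $\Omega(\ab/(\dst^2 2^\numbits))$. That equality fails in this regime: when $\dst\ll 1/\sqrt{\ab}$ one has $\ab^2 \ll \ab/\dst^2$, so the bound you wrote is strictly weaker than the target. The fix is immediate: do not absorb $\dst'$ into a constant but keep the full TV bound $\Omega\bigl(k^2/((\dst')^2 2^\numbits)\bigr)$ and substitute $\dst'=\tfrac{\sqrt{k}}{2}\dst$ to get $\Omega\bigl(k/(\dst^2 2^\numbits)\bigr)$ uniformly; then $k=2\ab$ gives $\Omega(\ab/(\dst^2 2^\numbits))$ and $k=\Theta(1/\dst^2)$ gives $\Omega(1/(\dst^4 2^\numbits))$, exactly as claimed. (The only role of the ``$\dst'$ is a small constant'' observation is to ensure the TV corollary is applicable, not to simplify the bound itself.)
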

\noindent Details can be found in~\cref{sec:learning}.

\paragraph{Testing} Our next result, proved in~\cref{sec:testing}, is a general lower bound for uniformity testing (and
thus, \textit{a fortiori}, on the more general problem of identity
testing).\footnote{As uniformity testing is a special case of identity
testing, lower
bounds for the former problem imply worst-case lower bounds for the
latter. 
}

\begin{theorem}
  \label{theo:interactive:lb:testing}
The sample complexity of $(2\ab,\dst)$-uniformity testing under local constraints $\cW$ using interactive protocols is 
    \[
        \bigOmega{\frac{\ab}{\dst^2 \sqrt{\norm{\cW}_{\rm op} \norm{\cW}_{\ast}}}}.
    \]
\end{theorem}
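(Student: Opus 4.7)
The plan is to reduce the testing lower bound to a chi-squared calculation on a Paninski perturbation family and control the resulting chaos using the channel information matrices of the (adaptively chosen) per-step channels. For $z \in \{\pm 1\}^{\ab}$, define $\p_z \in \distribs{2\ab}$ by $\p_z(2i-1) = (1+z_i\dst)/(2\ab)$ and $\p_z(2i) = (1-z_i\dst)/(2\ab)$, so that $\totalvardist{\p_z}{\uniform} = \Theta(\dst)$ for every $z$. Let $\mathbf{Q}_\Pi$ and $\mathbf{P}_\Pi$ denote the laws of the transcript $(Y^\ns, U)$ when $X^\ns\sim \uniform^{\otimes \ns}$ and when $X^\ns\sim \p_Z^{\otimes \ns}$ with $Z$ uniform on $\{\pm 1\}^{\ab}$, respectively. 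By a standard two-point / Le Cam reduction, it suffices to show that $\chi^2(\mathbf{P}_\Pi \,\|\, \mathbf{Q}_\Pi) = o(1)$ whenever $\ns$ falls below the claimed threshold.

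Because the channel $W_t^{Y^{t-1}}\in \cW$ used at step $\tst$ is adaptively chosen, the transcript does not factorize; I would apply a chain-rule / Ingster-type expansion. The key observation is that \emph{conditioned} on $Y^{t-1}$, the step-$\tst$ channel is deterministic, so a direct single-step Paninski computation yields
\begin{equation*}
\mathbb{E}\!\left[\frac{\p_z(Y_t\mid Y^{t-1})\,\p_{z'}(Y_t\mid Y^{t-1})}{\uniform(Y_t\mid Y^{t-1})^2}\,\Big|\, Y^{t-1}\right] = 1 + \frac{c\,\dst^2}{\ab}\, z^\top H(W_t^{Y^{t-1}})\, z',
\end{equation*}
where $H(\cdot)$ is the channel information matrix of \cref{eq:channel:matrix} and $c>0$ is an absolute constant. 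Iterating the tower property under $\mathbf{Q}_\Pi$ and using $1+x \le e^x$ then gives
\begin{equation*}
1 + \chi^2(\mathbf{P}_\Pi \,\|\, \mathbf{Q}_\Pi) \;\le\; \mathbb{E}_{\mathbf{Q}_\Pi}\!\left[\mathbb{E}_{Z,Z'}\!\left[\exp\!\Bigl(\tfrac{c\,\dst^2}{\ab}\, Z^\top M\, Z'\Bigr)\right]\right],\qquad M \eqdef \sum_{t=1}^{\ns} H(W_t^{Y^{t-1}}).
\end{equation*}

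Conditional on the transcript, $M$ is a deterministic PSD matrix and the inner expectation is a decoupled second-order Rademacher chaos; a standard Hanson--Wright estimate gives $\mathbb{E}_{Z,Z'}[\exp(\lambda\, Z^\top M\, Z')] \le \exp(O(\lambda^2 \|M\|_F^2))$ whenever $\lambda \lesssim 1/\|M\|_{\rm op}$. The crucial algebraic step combines the Hölder-type bound from \cref{eqn:holder} with subadditivity of the operator and nuclear norms:
\begin{equation*}
\|M\|_F^2 \;\le\; \|M\|_{\rm op}\,\|M\|_\ast \;\le\; \Bigl(\sum_{t=1}^{\ns} \|H(W_t^{Y^{t-1}})\|_{\rm op}\Bigr)\Bigl(\sum_{t=1}^{\ns} \|H(W_t^{Y^{t-1}})\|_\ast\Bigr) \;\le\; \ns^{2}\,\norm{\cW}_{\rm op}\,\norm{\cW}_{\ast}.
\end{equation*}
With $\lambda \asymp \dst^2/\ab$, this forces $\chi^2 = o(1)$ as long as $\ns^2\,\dst^4\,\norm{\cW}_{\rm op}\,\norm{\cW}_{\ast}/\ab^2 = o(1)$, equivalently $\ns = o\bigl(\ab/(\dst^2\sqrt{\norm{\cW}_{\rm op}\,\norm{\cW}_{\ast}})\bigr)$, which is precisely the claimed lower bound.

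The main obstacle is the single-step / chain-rule identity above: extracting a clean per-step quadratic form $z^\top H(W_t^{Y^{t-1}})\, z'$ while $W_t^{Y^{t-1}}$ is random (through $Y^{t-1}$ and the public coin $U$) and the tower is taken under the transcript-mixture measure $\mathbf{Q}_\Pi$. This will require a careful change-of-measure / martingale argument to peel off one time step at a time without generating cross-terms that couple the $(Z,Z')$ coordinates across different steps; it is the testing analogue of the nuclear-norm lemma \cref{thm:total} used in the learning proof. Once this per-step decomposition is in place, the remaining ingredients (Hanson--Wright, Hölder, and subadditivity of $\|\cdot\|_{\rm op}$ and $\|\cdot\|_\ast$) are routine.
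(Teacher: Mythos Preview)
Your per-step identity is correct, and the final algebraic step bounding $\|M\|_F^2$ by $\ns^2\norm{\cW}_{\rm op}\norm{\cW}_\ast$ via H\"older and subadditivity is exactly the right inequality. However, the step you yourself flag as ``the main obstacle'' is not merely an obstacle~--- it is where the argument, as written, breaks. The claimed inequality
\[
1 + \chi^2(\mathbf{P}_\Pi\|\mathbf{Q}_\Pi)\;\le\;\mathbb{E}_{\mathbf{Q}_\Pi}\!\Bigl[\,\mathbb{E}_{Z,Z'}\bigl[\exp\bigl(\tfrac{c\dst^2}{\ab}\,Z^\top M Z'\bigr)\bigr]\Bigr],\qquad M=\sum_{t=1}^{\ns}H(W_t^{Y^{t-1}}),
\]
does \emph{not} follow from ``iterating the tower property and $1+x\le e^x$.'' In the noninteractive case each $H_t$ is deterministic, so after conditioning out $Y_t$ the factor $1+\lambda\, z^\top H_t z'$ can be peeled off and replaced by $e^{\lambda z^\top H_t z'}$ term by term. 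In the interactive case $H_t$ is $Y^{t-1}$-measurable and hence correlated with $\prod_{s<t}L_s$; the conditional expectation at time $t-1$ then sees $e^{\lambda z^\top H_t z'}$ entangled with $L_{t-1}$, and you cannot separate them. One can show that $R_n=\bigl(\prod_t L_t\bigr)\, e^{-\lambda z^\top M z'}$ is a $\mathbf{Q}_\Pi$-supermartingale, but $\mathbb{E}_{\mathbf{Q}_\Pi}[R_n]\le 1$ does not imply $\mathbb{E}_{\mathbf{Q}_\Pi}[\prod_t L_t]\le\mathbb{E}_{\mathbf{Q}_\Pi}[e^{\lambda z^\top M z'}]$ when $M$ is random. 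This is precisely the difficulty the paper calls out in the remark following~\cref{eq:testing:chainrule}: the Ingster/decoupled-$\chi^2$ route is ``formidable'' for interactive protocols, and the paper deliberately abandons it.

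The paper's proof takes a genuinely different route. It applies the chain rule to $\kldiv{\q^{Y^\ns}}{\uniform^{Y^\ns}}$ rather than to a $\chi^2$ product, so that at each step one faces the \emph{mixture} conditional $\q^{Y_{t+1}\mid Y^t}$ instead of two independent copies $\p_z,\p_{z'}$. The key computation (\cref{lemma:at-time-t}) shows that the per-step divergence is, up to constants, $\tfrac{\dst^2}{\ab}\,\bEEC{Z}{Y^t}^\top H(W^{Y^t})\,\bEEC{Z}{Y^t}$: a quadratic form in the \emph{posterior mean} of $Z$, not a bilinear form in independent Rademacher vectors. This is bounded by $\norm{\cW}_{\rm op}\cdot\normtwo{\bEEC{Z}{Y^t}}^2$, and the decisive new idea is that $\mathbb{E}\,\normtwo{\bEEC{Z}{Y^t}}^2$ is controlled (via~\cref{lem:information-MSE-loss}) by $\sum_i \mutualinfo{Z_i}{Y^t}$, which is in turn at most $O(t\dst^2/\ab)\norm{\cW}_\ast$ by the learning bound~\cref{thm:total}. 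Summing over $t$ produces the same $\ns^2\dst^4\norm{\cW}_{\rm op}\norm{\cW}_\ast/\ab^2$ you were aiming for, but the mechanism is a testing-to-learning reduction through mutual information, not a Rademacher chaos / Hanson--Wright argument. If you want to pursue your outline, the missing ingredient is an honest replacement for the displayed inequality above; nothing downstream of it is in doubt.
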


\noindent\citet{ACT:18:IT1} previously established an $\bigOmega{\frac{\ab}{\dst^2\norm{\cW}_{F}}}$  lower
bound for (noninteractive) public-coin protocols.

\italicparagraph{LDP and communication-limited testing.}  We now apply this to common local constraints.
\amargin{I commented out the thing with norm comments, since I think they are written below the results. We can of course move them here.}
\begin{corollary}
  \label{coro:testing:lb:ldp}
Let $\priv\in(0,1]$. The sample complexity of interactive  $(2\ab,\dst)$-uniformity testing under $\priv$-LDP channels $\cW_\priv$ is 
    \[
        \bigOmega{\frac{\ab}{\dst^2\priv^2 }}.
    \] 
\end{corollary}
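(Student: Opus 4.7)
The plan is to deduce this corollary directly from the general interactive testing lower bound in \cref{theo:interactive:lb:testing} applied to the family $\cW=\cW_\priv$. By that theorem, the sample complexity of $(2\ab,\dst)$-uniformity testing under $\priv$-LDP channels is
\[
  \bigOmega{\frac{\ab}{\dst^{2}\sqrt{\norm{\cW_\priv}_{\rm op}\norm{\cW_\priv}_{\ast}}}},
\]
so the only thing to do is to control the product $\norm{\cW_\priv}_{\rm op}\norm{\cW_\priv}_{\ast}$ in terms of $\priv$.

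For the nuclear norm, I would reuse exactly the ingredient invoked in the proof of \cref{coro:learning:lb:ldp}: by \cite[Lemma~V.5]{ACT:18:IT1}, every channel information matrix $H(W)$ with $W\in \cW_\priv$ has nuclear norm $O(\priv^2)$ (this is the standard fact that each output of an $\priv$-LDP channel contributes at most an $O(\priv^2)$ factor, since for $\priv\in(0,1]$ the ratio bound $e^{\priv}$ forces second-order behavior). Thus
\[
  \norm{\cW_\priv}_{\ast}=O(\priv^{2}).
\]
For the operator norm, I would not need a separate argument: the second inequality in \cref{eqn:holder} (monotonicity of Schatten norms) gives $\norm{H(W)}_{\rm op}\leq \norm{H(W)}_{\ast}$ for every channel, hence
\[
  \norm{\cW_\priv}_{\rm op}\leq \norm{\cW_\priv}_{\ast}=O(\priv^{2}).
\]

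Combining the two bounds yields $\sqrt{\norm{\cW_\priv}_{\rm op}\norm{\cW_\priv}_{\ast}}=O(\priv^{2})$, and plugging into the interactive lower bound produces the claimed $\bigOmega{\ab/(\dst^{2}\priv^{2})}$. There is no real obstacle here: the entire content of the corollary is the pairing of \cref{theo:interactive:lb:testing} with the LDP spectral estimate already used for the learning corollary, together with the norm monotonicity recorded in \cref{eqn:holder}.
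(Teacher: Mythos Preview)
Your proposal is correct and essentially matches the paper's own proof: both apply \cref{theo:interactive:lb:testing} to $\cW_\priv$ and invoke \cite[Lemma~V.5]{ACT:18:IT1} for the $O(\priv^2)$ spectral bound. The only cosmetic difference is that the paper cites that lemma as giving $\norm{\cW_\priv}_{\rm op}\asymp\norm{\cW_\priv}_{F}\asymp\norm{\cW_\priv}_{\ast}=O(\priv^2)$ directly, whereas you cite it only for the nuclear norm and then use the monotonicity in~\cref{eqn:holder} to bound the operator norm; either route is immediate.
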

\begin{proof}
    This follows from~\cref{theo:interactive:lb:testing} and
 the fact that  $\norm{\cW_\priv}_{\rm op} \asymp \norm{\cW_\priv}_{F} \asymp \norm{\cW_\priv}_{\ast} = O(\priv^2)$ shown in \cite[Lemma~V.5]{ACT:18:IT1}.
\end{proof}
\begin{corollary}
  \label{coro:testing:lb:comm}
Let $1\leq \numbits\leq \log\ab$. The sample complexity of interactive  $(2\ab,\dst)$-uniformity testing under communication constraints $\cW_\numbits$  is
    \[
        \bigOmega{\frac{\ab}{\dst^2 2^{\numbits/2} }}.
    \]
\end{corollary}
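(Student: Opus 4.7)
The plan is to apply \cref{theo:interactive:lb:testing} to the specific family $\cW_\numbits$ and then bound the relevant norms. Since the statement of \cref{theo:interactive:lb:testing} gives the lower bound
\[
\bigOmega{\frac{\ab}{\dst^2\sqrt{\norm{\cW_\numbits}_{\rm op}\norm{\cW_\numbits}_{\ast}}}},
\]
it suffices to show that $\norm{\cW_\numbits}_{\rm op}\cdot\norm{\cW_\numbits}_{\ast} = O(2^{\numbits})$, which will immediately yield the claimed $\bigOmega{\ab/(\dst^2 2^{\numbits/2})}$.

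The first ingredient is the nuclear norm bound $\norm{\cW_\numbits}_{\ast} \leq 2^\numbits$ proved in \cite[Lemma~V.1]{ACT:18:IT1}, which is the exact same bound already invoked in the proof of \cref{coro:learning:lb:comm}. The second ingredient is an operator-norm bound of the form $\norm{\cW_\numbits}_{\rm op} = O(1)$. I expect this to follow from a direct spectral analysis of the channel information matrix $H(W)$ defined in \cref{eq:channel:matrix}: for any $W\colon[2\ab]\to\{0,1\}^\numbits$, $H(W)$ is a sum of at most $2^\numbits$ rank-one PSD matrices coming from the outputs $y\in\{0,1\}^\numbits$, and each such summand has a norm controlled by the probabilities $W(y\mid x)$ which sum to $1$ in $y$. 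A short calculation of the form $\norm{H(W)}_{\rm op} = \sup_{\normtwo{v}=1} v^\top H(W) v \leq 1$ (or some absolute constant) should work uniformly over $W\in\cW_\numbits$; this appears already implicitly in \cite{ACT:18:IT1} in connection with the $\ell_2$-testing bound.

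Combining these two ingredients gives $\sqrt{\norm{\cW_\numbits}_{\rm op}\norm{\cW_\numbits}_{\ast}} = O(2^{\numbits/2})$, which when substituted into \cref{theo:interactive:lb:testing} yields the desired sample-complexity lower bound.

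The only nontrivial step is the operator-norm bound $\norm{\cW_\numbits}_{\rm op} = O(1)$, as the nuclear bound is already on record. This is, however, essentially tight: Hölder's inequality \eqref{eqn:holder} combined with the known public-coin lower bound $\Omega(\ab/(\dst^2\norm{\cW_\numbits}_F))$ and $\norm{\cW_\numbits}_F^2 = \Theta(2^\numbits)$ forces $\norm{\cW_\numbits}_{\rm op}\cdot\norm{\cW_\numbits}_{\ast} = \Omega(2^\numbits)$, so the product bound we are proving is in fact sharp, and the interactive bound coincides with the public-coin bound reported in \cref{table:results:lowerbounds}.
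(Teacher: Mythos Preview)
Your approach is correct and matches the paper's: apply \cref{theo:interactive:lb:testing}, invoke the nuclear-norm bound $\norm{\cW_\numbits}_\ast \leq 2^\numbits$ from \cite[Lemma~V.1]{ACT:18:IT1}, and combine it with an $O(1)$ operator-norm bound. The one point to correct is the attribution of the operator-norm bound: it is not taken from \cite{ACT:18:IT1} but is proved in the present paper as \cref{lemma:general:eigenvalue:bound}, which shows $\norm{H(W)}_{\rm op}\leq 2$ for \emph{any} channel $W$ via the Gershgorin circle theorem (bounding the row sums of $H(W)$). Your sketched rank-one decomposition route works too and gives the same constant~$2$ once you apply a weighted Cauchy--Schwarz, but you should either carry that computation out or simply cite \cref{lemma:general:eigenvalue:bound}.
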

\begin{proof}
  This follows from $\norm{\cW_\numbits}_{\ast} \leq 2^\numbits$ (\cite[Lemma~V.1]{ACT:18:IT1}) and $\norm{\cW_\numbits}_{\rm op} \leq 2$ from~\cref{lemma:general:eigenvalue:bound} in~\cref{ssec:testing:separation}.
\end{proof}

Both~\cref{coro:testing:lb:ldp,coro:testing:lb:comm} are tight up to constant factors, as they are in particular achieved by (noninteractive) public-coin protocols~\cite{ACFT:19,ACT:19}). This shows that for communication and local privacy constraints, interactive protocols are no more powerful than public-coin protocols, which are themselves more powerful than private-coin protocols.

\italicparagraph{A separation.} By relations between matrix norms~\eqref{eqn:holder}, it can be seen that the noninteractive public-coin lower bound of $\bigOmega{\frac{\ab}{\dst^2\norm{\cW}_{F}}}$ from~\cite{ACT:19:COLT} can be up to a $\ab^{1/4}$ factor smaller than the bound in~\cref{theo:interactive:lb:testing} for interactive protocols. Guided by the analysis of the proof of~\cref{theo:interactive:lb:testing}, we show that this maximal separation is achievable, and in particular demonstrate a separation between noninteractive and interactive protocols for uniformity testing (see~\cref{ssec:testing:separation} for details).
\begin{theorem}  
There exists a natural family of constraints, which we term \emph{leaky-query} channels, under which the sample complexity of $(2\ab,\dst)$-uniformity testing for noninteractive public-coin protocols and interactive protocols are $\Theta(\ab/\dst^2)$ and $\Theta(\ab^{3/4}/\dst^2)$, respectively. 
\end{theorem}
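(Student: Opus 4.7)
The strategy is to exhibit an explicit family of ``leaky-query'' channels whose channel information matrices have a highly nonuniform spectrum that simultaneously separates the three matrix norms, read off matching lower bounds from~\cref{theo:interactive:lb:testing} and the noninteractive public-coin Frobenius bound of~\cite{ACT:19:COLT}, and then supply matching upper bounds. I would index the channels by $j\in[\ab]$; on input $x\in[2\ab]$, channel $W_j$ emits a \emph{strong} output that reveals the identity of $x$ whenever $x\in\{2j-1,2j\}$, and a \emph{weak leakage} output drawn from a small shared alphabet otherwise, tuned so that each non-queried pair contributes residual bias of order $1/\sqrt{\ab}$. A direct computation using the definition of $H(W)$ then shows that $H(W_j)$, averaged uniformly over $j\in[\ab]$, has one eigenvalue of order $1$ (carried by the queried pair) and $\ab-1$ eigenvalues of order $1/\sqrt{\ab}$ (carried by the leakage). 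This yields
\[
\norm{\cW}_{\rm op}=\Theta(1),\qquad \norm{\cW}_{F}^2=\Theta(1),\qquad \norm{\cW}_{\ast}=\Theta(\sqrt{\ab}),
\]
so that~\cref{theo:interactive:lb:testing} immediately gives the interactive lower bound $\bigOmega{\ab^{3/4}/\dst^2}$, while the noninteractive public-coin bound $\bigOmega{\ab/(\dst^2\norm{\cW}_F)}$ gives $\bigOmega{\ab/\dst^2}$.

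The matching noninteractive upper bound $O(\ab/\dst^2)$ is routine: each of the $\ns$ users samples $j\in[\ab]$ uniformly, reports $W_j$ applied to its input, and the server runs a collision- or $\chi^2$-type uniformity tester on the induced distribution over $[2\ab]$; the per-sample $\ell_2$ signal under the alternative is of order $\dst^2/\ab$, which after $\ns=\Theta(\ab/\dst^2)$ samples exceeds the null fluctuation.

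The matching interactive upper bound $O(\ab^{3/4}/\dst^2)$ uses a two-phase adaptive protocol. In Phase~$1$, $\Theta(\ab^{3/4}/\dst^2)$ users issue uniform queries, and from the per-pair empirical counts the server flags a candidate set $S\subseteq[\ab]$ of size $O(\sqrt{\ab})$ consisting of the pairs whose counts are furthest from the uniform baseline. In Phase~$2$, of the same budget, users are instructed to query only indices in $S$; conditioned on a sample lying in $S$, the per-sample signal jumps from $\Theta(1/\ab)$ to $\Theta(1/\sqrt{\ab})$, and a restricted-support $\chi^2$ tester on an effective support of size $O(\sqrt{\ab})$ succeeds with $O(\sqrt{\ab}/\dst^2)$ samples landing in $S$, which is met by the Phase~$2$ budget. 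Any $\dst$-far alternative is detected by at least one phase via an $\ell_2$ decomposition: the deviation either splits into a diffuse component (directly caught by Phase~$1$) or a concentrated component on few pairs, which necessarily lies in $S$ and is caught by Phase~$2$.

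The main obstacle is precisely this two-scale analysis of the interactive upper bound. One must calibrate the Phase~$1$ threshold so that $|S|=O(\sqrt{\ab})$ holds with high probability under the null while $S$ still captures every heavy pair deviation under the alternative, and then propagate the adaptively chosen $S$ through the Phase~$2$ analysis without suffering a union bound over its exponentially many realizations---this requires a conditioning argument on the Phase~$1$ transcript rather than a worst-case bound, combined with a careful $\ell_2$ budget decomposition of the pair-deviation vector into its heavy and diffuse components.
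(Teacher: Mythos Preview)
Your channel construction and the paper's differ in a way that turns out to be essential. The paper's leaky-query family is indexed by \emph{arbitrary} vectors $u\in[0,1]^{2\ab}$: each channel leaks the raw sample with probability $\eta=1/\sqrt{\ab}$ and otherwise answers a single membership query ``is $x$ in the set encoded by $u$?''. Your family is indexed by a single pair $j\in[\ab]$, with a strong reveal on that pair and diffuse leakage elsewhere. Both families achieve the same three norm values, so the lower bounds from~\cref{theo:interactive:lb:testing} and the Frobenius bound go through identically.

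The gap is in your interactive upper bound. Your two-phase ``localize heavy pairs, then test on them'' protocol does not handle the Paninski alternative $\p_z$, which is precisely the hard instance. Under $\p_z$ every pair deviates by the same magnitude, so there are no heavy pairs for Phase~1 to flag; whatever set $S$ of size $O(\sqrt{\ab})$ Phase~1 returns is essentially a uniformly random subset. Phase~2 then restricts attention to $S$, but the $\lp[2]$ mass of the deviation captured by $S$ is only an $O(1/\sqrt{\ab})$ fraction of the total, far too little to test with the remaining budget. And Phase~1 on its own cannot serve as a test for diffuse alternatives with only $\ab^{3/4}/\dst^2$ users: Phase~1 is a noninteractive protocol, and you have just argued that any such protocol requires $\Omega(\ab/\dst^2)$ users. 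So neither phase catches $\p_z$, and the $\lp[2]$ decomposition you invoke does not rescue the argument because the diffuse component is the entire deviation.

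More fundamentally, it is unclear that your channel family even admits an $O(\ab^{3/4}/\dst^2)$ interactive test. The paper's protocol crucially needs the ability to query membership in an \emph{arbitrary, adaptively chosen} subset $S$: it first leaks $N\asymp \ab^{1/4}/\dst^2$ raw samples (via the erasure component), collects them into $S$, and then uses a single membership-query channel $W_{\indic{S}}$ to estimate $\p(S)$. The point is that $\bE{\p}{\p(S)}-\bE{\uniform}{\uniform(S)}\gtrsim N\dst^2/\ab$, a gap large enough to resolve with $\ab^{3/4}/\dst^2$ Bernoulli samples. Your single-pair channels cannot implement this set-membership query, and without it the separating interactive protocol does not go through.
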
 

\italicparagraph{Power of the proof.}
Finally, we emphasize that $\sqrt{\norm{\cW}_{\rm
op} \norm{\cW}_{\ast}}$ is a convenient, easy-to-apply bound which is optimal for the channel families considered above. However, the power 
of our techniques goes beyond that specific evaluation. To show this, we provide in~\cref{ssec:testing:loosebound:erasure} a family of partial erasure
constraints $\cW_{\bot}$ for which the bound given
in~\cref{theo:interactive:lb:testing} is loose, and for which
interactivity does not help.
Yet, while the general bound given in the statement of the theorem is not tight, the {proof} of~\cref{theo:interactive:lb:testing}, instantiated with this specific family $\cW_{\bot}$ in mind, readily gives the correct bound.

  \subsection{Prior work}
    \label{ssec:previous}
  There is a vast literature on statistical inference under LDP and 
communication constraints. We discuss some of these works below, focusing on
those most relevant to ours.     

Several protocols have been proposed for discrete distribution estimation and 
testing under communication and privacy constraints. To the best of
our knowledge, all these schemes  
are noninteractive.~\cite{EPK:14, DJW:13:FOCS, WHWNXYLQ:16, KBR:16,
ASZ:18, YeB17, AS:19} provide schemes under LDP, and~\cite{HMOW:18, HOW:18:v1, ACT:18:DSDI,
ACT:19:ICML} provide estimation schemes under
communication constraints.~\cite{Bassily:19} considers estimation schemes under LDP in the $\lp[2]$ distance.~\cite{ACFT:19, BB:20, AJM:20} consider
distribution testing under various privacy constraints,
and~\cite{ACT:19:COLT, ACT:18:IT1, ACT:19, ACT:19:ICML, FMO:18} study distribution testing under
{several} communication constraints.~\cite{ACHST:20} focuses on the role of shared randomness in distributed testing under information constraints. 
Most relevant to this paper is prior work by a subset of the authors~\cite{ACT:18:IT1}
which 
provides a unifying view of lower bounds under information
constraints in the noninteractive setting. We build on this work here.

\paragraph{Interactive testing and estimation of discrete distributions}
We now describe prior work on distribution testing and learning for
discrete distributions in the interactive setting. We focus on the papers that obtain or claim similar
results as ours. We point out the technical flaws in some of the
prior work and outline the state of the art.

\citet{DJW:17} (also, see preprint~\cite{DJW:13}) state 
lower bounds on distributed estimation
of several families of distributions under LDP constraints. 
While their results hold true in one-dimensional settings and for noninteractive protocols, a crucial component of their proof of a private analogue of Assouad's method (\cite[Proposition 3]{DJW:17}) is their claim that, under a marginal mixture distribution they consider, the distribution of the sample is independent from the previous messages; in particular, this claim is used to show \cite[Theorem 3, supplemental (12)]{DJW:17}. 
This claim of independence simplifies
the analysis and
takes care of several dependency
structures that might arise in  sequentially interactive
protocols.
However, this key identity only holds for noninteractive protocols, not in general, even in the absence of any local constraint.

In another direction,~\citet{HMOW-ISIT:18} claim lower bounds on
distributed estimation (in total variation distance)
of several families of distributions under communication constraints.
In~\cite{HOW:18:v1}, the authors claim lower
bounds on distribution estimation under the $\lp[2]$ distance as
well. The arguments
of~\cite{HMOW-ISIT:18} and~\cite{HOW:18:v1} appear to both rely on a
particular flawed step stated as~\cite[Lemma~3]{HOW:18:v1}, which essentially reduces their problem to the
noninteractive setting. But this step does
not hold in the interactive setting. Following an earlier version of this work, made available as preprint, the authors of~\cite{HOW:18:v1} were able to mend their proofs by leveraging the techniques developed in the present paper.

Turning to testing, both \citet{AJM:20} and~\citet{BB:20} establish 
optimal lower bounds on uniformity testing under LDP constraints.\footnote{The conference version of~\cite{AJM:20} had a flaw on the claim that for any fixed setting
of the previous messages $Y^{\tst-1}$, the random variables $Y_\tst$
and $Z_i$ (a parameter of their lower bound construction) are
independent conditioned on $X_\tst\notin\{2i-1, 2i\}$. This is not true in general, as it overlooks some conditional
dependencies that may arise. However, after this was brought to their attention, the authors were able fix the gap in their argument, using techniques that bear some resemblance to the ones used in the present paper~\cite{AJM:20:personal}.}\footnote{The result of~\cite{BB:20} is
actually phrased in a more general way, as they address the more
general problem of \emph{identity testing}, where the reference
distribution need not be uniform and the lower bound quantitatively depends on the reference distribution itself.}
In particular, this implies that the separation between
private-coin and public-coin noninteractive LDP protocols shown
in~\cite{ACT:19:COLT} does not increase when allowing sequential
interactivity.
\hmargin{Why are we giving them credit for general bounds? Rephrased to
take some credit away:)}
However, we note that the results in these paper do not extend
to general constraints. Furthermore, even when we try to use
their techniques to obtain general bounds, we only get
 bounds as a function of
$\norm{\cW}_\ast$ alone. This turns out to be optimal for LDP
constraints, but would lead a suboptimal bound for other types of
constraints, such as communication constraints (where it would yield a denominator
of $2^\numbits$ instead of the optimal $2^{\numbits/2}$).

Several works have studied distribution estimation under the $\lp[2]$
distance~\cite{BGMNW:16, BHO:19, BCO:20} for parametric families of distributions.~\cite{BHO:19,
BCO:20} develop Fisher information-based methods to obtain these
bounds, and one of the distribution families they consider is the
class of discrete distributions. For sequentially interactive
protocols and this particular class, our lower bounds for estimation
under the total variation distance imply their results.  We point out,
nevertheless, that their bounds apply to a larger class of protocols
(\emph{blackboard} protocols, which allow for multiple rounds of
messages from each user), and therefore hold in a more general setting
than ours. However, it is important to remark that these techniques do
not suffice for the total variation distance setting, and more
importantly, unlike the bounds claimed in~\cite{HOW:18:v1}, cannot
give bounds for testing.

Slightly further from the setting considered
here,~\cite{DGLNOS:17}\footnote{To the best of our knowledge, the
details of the proofs of~\cite{DGLNOS:17} have not been made publicly
available, and as such we have not been able to assess correctness of
the results claimed in this paper.} and~\cite{DGKR:19} also consider
distributed estimation and identity testing of discrete distributions,
respectively, under total variation distance. Although their results
apply to blackboard protocols, their setting and results are
incomparable to ours as they consider constraints on the \emph{total}
communication sent by all the users.

In a different direction, recent work of~\citet{BCL:20} considers the
task of testing whether a quantum state is \emph{maximally mixed}, the
quantum analogue of uniformity testing. They focus on the setting
where one is only allowed local measurements (\ie{} without
entanglement) and provide a lower bound showing a separation between
sequentially interactive ``local measurement'' protocols and the more
general fully entangled ones. We note that, while the setting differs
from the one we consider here, some of the considerations are similar,
and there is a direct analogy between their techniques and those
of~\cite{ACT:19:COLT}.

\paragraph{General interactive testing and estimation bounds}
Several papers have studied the role of interactivity for specific
estimation tasks, establishing separation results under either local
privacy or communication constraints.

The study of interactivity in LDP started with~\cite{KLNRS:11} who
designed a learning task that requires exponentially fewer samples
with interactive protocols than with noninteractive ones. Moreover,
this separation can manifest itself in very natural optimization or
learning problems~\cite{STU:17,DF:19,Ullman:18}.~\citet{JMNR:19}
and~\citet{JMR:20} study the relation between sequentially interactive
protocols and fully interactive protocols (where the same user can
send multiple messages), establishing both relations and strong
separations in sample complexity between the two
settings.~\citet{DR:19}, drawing on machinery from the communication
complexity literature, develop a lower bound results which apply to
any locally private estimation protocol (regardless of the
interactivity model).~\citet{Shamir:14} studies various estimation tasks under a range of information constraints. Finally,~\citet{DF:20} establish a separation
between interactive and noninteractive learning for large-margin
classifiers, under both local privacy and communication constraints.

\section{Preliminaries}
  \label{sec:preliminaries}
Hereafter, we write $\log$ and $\ln$ for the binary and natural
logarithms, respectively. We will consider probability distributions
over $[2\ab]$ which we identify with their probability mass functions
$\p\colon[2\ab]\to [0,1]$ satisying $\sum_{x\in\cX} \p(x) = 1$. We denote by $\distribs{2\ab}$ the set of 
all such probability distributions. We denote by $\uniform$ the
uniform distribution over $[2\ab]$.

For two distributions
$\p_1,\p_2$ over $\cX$, denote
their total variation distance by
\begin{equation}
\totalvardist{\p_1}{\p_2} \eqdef \sup_{S\subseteq \cX} (\p_1(S)
- \p_2(S)),
\label{eqn:distances}	 
\end{equation}
and their Kullback--Leibler divergence
and chi square divergence, respectively, by
\begin{equation*}
\kldiv{\p_1}{\p_2} \eqdef \sum_{x\in\cX} \p_1(x) \log\frac{\p_1(x)}{\p_2(x)} \quad\text{ and }\quad
\chisquare{\p_1}{\p_2}\eqdef\sum_{x\in\cX} \frac{(\p_1(x) - \p_2(x))^2}{\p_2(x)}.
\end{equation*}
By Pinsker's inequality and concavity of logarithm, these quantities
obey the inequalities:
\[
\totalvardist{\p_1}{\p_2}^2 \leq \frac{\ln
2}{2} \kldiv{\p_1}{\p_2} \leq \frac{\ln 2}{2} \chisquare{\p_1}{\p_2}\,.
\]

Throughout, We use  the standard asymptotic notation $\bigO{f}$,
$\bigOmega{f}$, $\bigTheta{f}$. In addition, we will often write
$a_n\lesssim b_n$  (resp. $a_n\gtrsim b_n$), to indicate there exists
an absolute constant $C>0$ such that $a_n \leq C\cdot b_n$
(resp. $a_n \geq C\cdot b_n$) for all $n$, and accordingly write
$a_n \asymp b_n$ when both $a_n\lesssim b_n$ and $a_n\gtrsim b_n$. 

\paragraph{Interactive protocols}
We set up some notation for sequentially interactive protocols, defined
in~\cref{sec:setting}. When public-coin $U$ is \newest{fixed} \newer{constant}, we will call the protocol a {\em
deterministic protocol}\amargin{are we sure deterministic is the best phrase we could come up with?}.\footnote{This is a slight abuse of notation, since randomness is used by the channels to generate their \newest{(random)} output.}
Recall that in interactive protocols,
user $\tst$ selects its channel $W\in \cW$ as a function of $(Y^{\tst-1}, U)$.
We denote this channel by $W^{Y^{\tst-1},U}$, or simply by $W^{Y^{\tst-1}}$
for deterministic protocols, and the corresponding output by $Y_\tst$.

We call $(Y^\ns, U)$ the {\em transcript} of the protocol, which is used to
complete the  inference task.
For a fixed protocol $\Pi$, when the input $X^\ns$ has distribution
$\p^\ns$, we denote the distribution of the transcript by $\p_\Pi^{Y^\ns, U}$.
In fact, we often omit the dependence on the protocol from our notation (since it will
be clear from the context) and simply use $\p^{Y^\ns, U}$. For deterministic protocols,
$\p^{Y_\tst|Y^{\tst-1}}$ will be used to denote the conditional distribution of the message
$Y_\tst$ of the $\tst$th user, conditioned on the past messages $Y^{\tst-1}=(Y_1, \ldots, Y_{\tst-1})$.

\paragraph{Lower bound construction} 
Our lower bounds rely on a family of perturbed distributions around
$\uniform$, a common starting point for establishing several
statistical lower bounds.\amargin{commented out reference to [act] here}
The particular construction we
use is from~\citet{Paninski:08} and 
\hmargin{No need to call it Paninski's construction; we can just refer to the equation.}
consists of $2^\ab$ distributions parameterized by
$\cZ=\bool^{\ab}$. Specifically, for $z\in\cZ$ the distribution $\p_z$
over $[2\ab]$ is given by 
\begin{equation}\label{eq:paninski} 
\p_z= \frac{1}{2\ab} \left(1+4\dst z_1, 1-4\dst z_1, \ldots, 1+4\dst
z_t, 1-4\dst z_t, \ldots, 1+4\dst z_{\ab}, 1-4\dst z_{\ab} \right)\,. 
\end{equation}
Each such $\p_z$ is therefore at total variation exactly $2\dst$ from $\uniform$. 

\paragraph{The channel information matrix}
We capture the information revealed by a channel about the distribution of its input
in terms of a matrix $H(W)$, which was defined in~\cite[Definition~I.5]{ACT:18:IT1}.

Specifically, for a channel $W\colon [2\ab]\to\cY$,
the associated {\em information matrix} $H(W)$ is 
the $\ab$-by-$\ab$ positive semi-definite (p.s.d.) matrix $H(W)$ given
by 
\begin{equation}
  \label{eq:channel:matrix}
  H(W)_{i_1,i_2} \eqdef \sum_{y\in\cY} \frac{(W(y\mid 2i_1-1)-W(y\mid
  2i_1))(W(y\mid 2i_2-1)-W(y\mid 2i_2))}{\sum_{x\in[2\ab]} W(y\mid
  x)}\,\quad i_1,i_2\in[\ab]. 
\end{equation}
This matrix captures the ability of the channel output to distinguish
between consecutive even and odd inputs, and is thus particularly
tailored to the Paninski perturbed family defined above. However, the ordering
of the elements is arbitrary and we can associate this matrix with any
partition of the domain into equal parts. 

\paragraph{Organization} The remainder of the paper is organized as
follows. In~\cref{sec:learning,sec:testing} we prove our general
results on learning and testing. In~\cref{ssec:testing:separation} we present a set of channels $\cW$ for
which interactivity helps when testing using $\cW$.

\section{Information-loss bounds} 
In this section, we present two bounds relating the loss for estimating Rademacher
random variables using correlated observations to mutual information. The bounds are
simple and highlighted separately here for easy reference -- in essence, they say that {\em small loss implies
large information}, the first step in any information-theoretic lower bound for statistical inference.

First, we consider estimation of a $\{-1,+1\}^\ab$-valued random vector
under the average Hamming loss function $\hamming{u}{v} \eqdef \sum_{i=1}^\ab\indic{u_i\neq v_i}$. Note that
$\bEE{\hamming{U}{V}}=\sum_{i=1}^\ab\bPr{U_i\neq V_i}$.
\begin{lemma}[Hamming loss] \label{lem:information-Hamming-loss}
Consider random variables $(Z,Y)$ with  $Z\in \{-1, 1\}^\ab$ being a random vector with independent Rademacher
  entries. 
  Let $\hat {Z}$ be a randomized function of $Y$, \ie{} such that the Markov relation $Z \text{---} Y
  \text{---} \hat Z$
  holds. Then,  for each $i\in [\ab]$, with $h(t) \eqdef - t\log t -(1-t)\log (1-t)$ denoting the binary entropy function,
 we get
  \[
  \mutualinfo{Z_i}{Y}\geq 1- h\mleft(\bPr{Z_i\neq \hat {Z}_i}\mright),
  \]
  whereby
  \[
\frac 1 \ab\sum_{i=1}^\ab \mutualinfo{Z_i}{Y}\geq 1- h\mleft(\frac 1 \ab\sum_{i=1}^\ab \bPr{Z_i\neq \hat{Z}_i}\mright).
\]
\end{lemma}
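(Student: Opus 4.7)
The plan is to reduce this to Fano's inequality applied coordinate-wise, and then to invoke concavity of the binary entropy function for the averaged statement.

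First, I would observe that since each $Z_i$ is Rademacher, it is uniform on $\{-1,+1\}$, hence $H(Z_i) = 1$. Also, since $\hat Z$ is a randomized function of $Y$, the same is true of $\hat Z_i$, so the Markov chain $Z_i \text{---} Y \text{---} \hat Z_i$ holds. By the data processing inequality, $\mutualinfo{Z_i}{Y} \geq \mutualinfo{Z_i}{\hat Z_i}$, and by definition
\[
\mutualinfo{Z_i}{\hat Z_i} = H(Z_i) - H(Z_i \mid \hat Z_i) = 1 - H(Z_i \mid \hat Z_i).
\]
Then I would apply Fano's inequality to bound $H(Z_i \mid \hat Z_i)$: letting $p_i \eqdef \bPr{Z_i \neq \hat Z_i}$, Fano gives
\[
H(Z_i \mid \hat Z_i) \leq h(p_i) + p_i \log(|\{-1,+1\}|-1) = h(p_i),
\]
since the binary-alphabet term $p_i \log 1$ vanishes. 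Combining yields the first inequality $\mutualinfo{Z_i}{Y} \geq 1 - h(p_i)$.

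For the averaged statement, I would simply average the per-coordinate bound over $i \in [\ab]$ to get
\[
\frac{1}{\ab}\sum_{i=1}^\ab \mutualinfo{Z_i}{Y} \geq 1 - \frac{1}{\ab}\sum_{i=1}^\ab h(p_i),
\]
and then apply Jensen's inequality using the concavity of the binary entropy function $h$, which gives $\frac{1}{\ab}\sum_i h(p_i) \leq h\bigl(\frac{1}{\ab}\sum_i p_i\bigr)$. Substituting and flipping the sign delivers the claimed bound. There is no real obstacle here; the lemma is essentially Fano's inequality packaged for the Rademacher-vector setting, with Jensen handling the averaging.
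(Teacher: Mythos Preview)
Your proof is correct and essentially identical to the paper's: both establish the per-coordinate bound via $H(Z_i)=1$, the data processing inequality for the Markov chain $Z_i\text{---}Y\text{---}\hat Z_i$, and Fano's inequality, then obtain the averaged bound by Jensen's inequality using concavity of $h$. The only cosmetic difference is that the paper writes the data processing step as $1 - H(Z_i\mid Y) \geq 1 - H(Z_i\mid \hat Z_i)$ rather than $\mutualinfo{Z_i}{Y}\geq \mutualinfo{Z_i}{\hat Z_i}$, which is the same inequality.
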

  \begin{proof}
Using the data processing inequality for mutual information,
\[
\mutualinfo{Z_i}{Y}= 1- H(Z_i\mid Y)\geq 1 - H(Z_i\mid \hat {Z}_i)\geq 1- h\mleft(\bPr{Z_i\neq \hat {Z}_i}\mright),
\]
where the second inequality is by Fano's inequality. Upon taking average over $i$, we get
\[
\frac 1 \ab\sum_{i=1}^\ab \mutualinfo{Z_i}{Y}\geq 1-\frac 1 \ab\sum_{i=1}^\ab h(\bPr{Z_i\neq \hat {Z}_i})\geq 1 - h\mleft(\frac 1 \ab\sum_{i=1}^\ab \bPr{Z_i\neq \hat{Z}_i}\mright),
\]
where the second inequality holds by concavity of $h(\cdot)$. 
\end{proof}    
  Next, we consider the mean squared loss function
  $\normtwo{u-v}^2$ for $u,v\in \{-1,+1\}^\ab$.
  Denote by $\operatorname{mmse}(Z\mid Y)$ the minimum mean squared
  error for estimating $Z$ using $Y$ given by
\begin{equation}
\operatorname{mmse}(Z\mid Y) \eqdef \min_{g\colon\cY \to \R^\ab} \bEE{\normtwo{Z- g(Y)}^2},
\end{equation}
where the minimum is taken over all randomized functions $g$ of $Y$. It is well known that the minimum is attained by $g(Y)= \bEEC{Z}{Y}$.
\begin{lemma}[Mean squared loss] \label{lem:information-MSE-loss}
Consider random variables $(Z,Y)$ with  $Z\in \{-1, 1\}^\ab$ being a random vector with independent Rademacher
  entries. Then, for each $i\in [\ab]$, we get
\[
\mutualinfo{Z_i}{Y}\geq
\frac{1}{2\ln 2} \bEE{\bEE{Z_i\mid Y}^2},
\]
whereby
\[
\frac 1 \ab\sum_{i=1}^\ab \mutualinfo{Z_i}{Y}\geq
\frac{1}{2\ln 2} \left(1- \frac 1 \ab \operatorname{mmse}(Z\mid Y)\right).
\]
\end{lemma}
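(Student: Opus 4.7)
The plan is to reduce the first inequality to a pointwise statement about binary entropy, prove that pointwise statement via Pinsker's inequality applied to the Bernoulli-versus-uniform case, and then deduce the averaged statement by recognizing that $\bEE{\bEE{Z_i\mid Y}^2}$ is exactly the ``explained variance'' in the MMSE decomposition.

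First I would introduce the notation $p(y) \eqdef \probaOf{Z_i = 1 \mid Y = y}$, so that $\bEE{Z_i \mid Y=y} = 2p(y)-1$. Since $Z_i$ is Rademacher, $H(Z_i) = 1$ and
\[
\mutualinfo{Z_i}{Y} = 1 - \bEE{h(p(Y))} = \bEE{\kldiv{\mathrm{Bern}(p(Y))}{\mathrm{Bern}(1/2)}},
\]
using the identity $1 - h(p) = \kldiv{\mathrm{Bern}(p)}{\mathrm{Bern}(1/2)}$ (in bits). The target inequality therefore reduces to showing, pointwise in $p \in [0,1]$, that
\[
\kldiv{\mathrm{Bern}(p)}{\mathrm{Bern}(1/2)} \geq \frac{(2p-1)^2}{2 \ln 2}.
\]
This follows directly from Pinsker's inequality in nats, $\kldiv{P}{Q} \geq 2\,\totalvardist{P}{Q}^2/\ln 2$ (after converting to bits), applied with $\totalvardist{\mathrm{Bern}(p)}{\mathrm{Bern}(1/2)} = |p - 1/2|$, so that $2(p-1/2)^2 = (2p-1)^2/2$. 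Taking expectation over $Y$ yields
\[
\mutualinfo{Z_i}{Y} \geq \frac{1}{2\ln 2}\, \bEE{(2p(Y)-1)^2} = \frac{1}{2\ln 2}\, \bEE{\bEE{Z_i \mid Y}^2},
\]
which is the first claim.

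For the averaged bound, I would sum the pointwise inequalities over $i$ and use the orthogonality decomposition $\bEE{(Z_i - \bEE{Z_i \mid Y})^2} = \bEE{Z_i^2} - \bEE{\bEE{Z_i \mid Y}^2} = 1 - \bEE{\bEE{Z_i \mid Y}^2}$, valid because $Z_i \in \{-1,+1\}$. Summing this identity over $i$ yields $\operatorname{mmse}(Z\mid Y) = \ab - \sum_{i=1}^{\ab} \bEE{\bEE{Z_i \mid Y}^2}$, and dividing by $\ab$ gives the claimed averaged inequality. I do not anticipate any real obstacle here; the only substantive step is the pointwise bound, which is just Pinsker specialized to Bernoulli and the uniform reference, and the MMSE identity is automatic for $\pm 1$-valued variables.
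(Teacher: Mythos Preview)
Your proposal is correct and follows essentially the same approach as the paper: write $\mutualinfo{Z_i}{Y}$ as the expected KL divergence between the conditional law of $Z_i$ given $Y$ and the Rademacher (uniform) law, apply Pinsker's inequality pointwise, convert $|p(Y)-1/2|$ to $\bEE{Z_i\mid Y}$ via the identity $\bEE{V}=2\bPr{V=1}-1$ for $\{-1,+1\}$-valued $V$, and then use the orthogonality decomposition to connect $\sum_i \bEE{\bEE{Z_i\mid Y}^2}$ to $\operatorname{mmse}(Z\mid Y)$. The only difference is the order of presentation (the paper establishes the MMSE identity first), which is immaterial.
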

\begin{proof}
By using $\bEE{Z_i\bEE{Z_i\mid Y}}=\bEE{\bEE{Z_i\mid Y}^2}$, we obtain
\[
1- \bEE{\normtwo{Z_i-\bEE{Z_i|Y}}^2}
=\bEE{\bEE{Z_i\mid Y}^2}.
\]
Thus, since $\operatorname{mmse}(Z\mid Y)=\sum_{i=1}^\ab \bEE{\normtwo{Z_i-\bEE{Z_i|Y}}^2}$,
the first inequality in the lemma implies the second.

To see the first inequality, note
\[
\mutualinfo{Z_i}{Y}=1 - H(Z_i \mid Y)= \bEE{D(\bPP{Z_i\mid Y}\| \tt{unif})},
\]
where $\tt{unif}$ denotes the Rademacher distribution. Thus, by Pinsker's inequality,
\[
\mutualinfo{Z_i}{Y}\geq \frac{2}{\ln 2}\bEE{\left(\frac 12 - \bPr{Z_i=1\mid Y}\right)^2}
=\frac{1}{2\ln 2}\bEE{\bEE{Z_i\mid Y}^2},
\]
where in the final step we used the observation that for any $\{-1, +1\}$-valued random
variable $V$, $\bEE{V}= 2\bPr{V=1}-1$. This completes the proof of the lemma.
\end{proof}

\section{Interactive learning under information constraints}\label{sec:learning}

We now prove~\cref{theo:interactive:lb:learning}, a lower bound on the
sample complexity for learning using interactive protocols under
general information constraints given by a channel family $\cW$.

We proceed as in~\cite{ACT:18:IT1} and use the construction
in~\cref{eq:paninski}. For each $z\in \bool^\ab$,  let $\p_{z}\in\Delta_{2\ab}$ denote the $2\ab$-ary distribution given in~\cref{eq:paninski}. We use a uniform prior over these distributions to get our lower bound. Specifically, let $Z$ be distributed uniformly over $\bool^\ab$.
Conditioned on $Z$, let $X_1, \ldots, X_\ns$ denote
independent samples from $\p_Z$. 
We run a
(sequentially) interactive protocol $\Pi$ which generates the
messages (transcript) $Y^\ns$ taking values in $\cY^\ns$.
The distribution of messages  over $\cY^\ns$ is given by 
\begin{equation}
  \label{eqn:paninski-message}
\q^{Y^\ns} \eqdef \frac1{2^{\ab}} \sum_{z \in \cZ}\p_{z}^{Y^\ns}
\end{equation}

In \cite{ACT:18:IT1}, Fano's inequality is used to derive the desired
bound. However, this requires us to derive a bound for 
$\mutualinfo{Z}{Y^\ns}$, the joint information in the message about the
vector $Z$. As noted in~\cite{DJW:17}, this is a formidable task for
interactive communication, since the correlation can be rather
complicated. Instead, we exploit the additive structure of total
variation distance to obtain an Assouad-type bound below, which
relates the loss in total variation function to the {\em average
information} $\frac 1 \ab\sum_{i=1}^\ab \mutualinfo{Z_i}{Y^\ns}$.\footnote{Note that $\sum_{i=1}^\ab \mutualinfo{Z_i}{Y^\ns}\leq \mutualinfo{Z}{Y^\ns}$,
suggesting that this bound is perhaps more stringent than the Fano-type bound in~\cite{ACT:18:IT1}.}

\begin{lemma}[Assouad-type bound]
\label{lem:paninski:estimation}
Consider local constraints $\cW$ and $\dst\in(0,1]$. 
Let $(\Pi, \hat{p})$ be an $(\ns,\dst/12)$-estimator using $\cW$ and $(Y^\ns,U)$ be the corresponding transcript.
Then, we must have
  \begin{align}
  \sum_{i=1}^{\ab} \condmutualinfo{Z_i}{Y^{\ns}}{U}\ge \frac{\ab}{2}.
  \end{align}
\end{lemma}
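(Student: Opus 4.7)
\textbf{Proof plan for \cref{lem:paninski:estimation}.} The strategy is a reduction from distribution estimation in total variation to estimation of the bit-string $Z$ in Hamming distance, followed by an application of the mutual-information--Hamming-loss inequality from \cref{lem:information-Hamming-loss}. Specifically, given the estimator $\hat p$ of $\p_Z$, define a bit-string estimator $\hat Z\in\{-1,+1\}^\ab$ by
\[
\hat Z_i \eqdef \operatorname{sign}\bigl(\hat p(2i-1) - \hat p(2i)\bigr), \qquad i\in[\ab],
\]
with ties broken arbitrarily. This $\hat Z$ is a (randomized) function of $(Y^\ns,U)$, so the Markov chain $Z\text{---}(Y^\ns,U)\text{---}\hat Z$ holds.

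\textbf{Step 1: TV error controls Hamming error.} From the definition of $\p_z$ in \cref{eq:paninski}, $\p_z(2i-1)-\p_z(2i)=\tfrac{4\dst z_i}{\ab}$. Hence, whenever $\hat Z_i\neq Z_i$, the triangle inequality gives
\[
|\hat p(2i-1)-\p_Z(2i-1)| + |\hat p(2i)-\p_Z(2i)| \;\ge\; \tfrac{4\dst}{\ab}.
\]
Summing over the indices $i$ where $\hat Z_i\neq Z_i$ and using $2\,\totalvardist{\hat p}{\p_Z}=\sum_{j\in[2\ab]}|\hat p(j)-\p_Z(j)|$, I obtain the deterministic inequality
\[
\hamming{\hat Z}{Z} \;\le\; \frac{\ab}{2\dst}\,\totalvardist{\hat p}{\p_Z}.
\]

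\textbf{Step 2: Bound the expected Hamming error.} Conditioning on the success event of the estimator (which has probability at least $99/100$) gives $\hamming{\hat Z}{Z}\le \ab/24$ on that event, and on its complement $\hamming{\hat Z}{Z}\le \ab$ trivially. Hence
\[
\frac{1}{\ab}\sum_{i=1}^\ab \bPr{\hat Z_i\neq Z_i} \;=\; \frac{1}{\ab}\bEE{\hamming{\hat Z}{Z}} \;\le\; \frac{99}{100}\cdot\frac{1}{24} + \frac{1}{100} \;\le\; \frac{1}{15}.
\]

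\textbf{Step 3: Convert to mutual information, handling the public coins.} Since $Z$ is independent of $U$, for every fixed $u$ the conditional prior on $Z$ given $U=u$ is still uniform on $\{-1,+1\}^\ab$ and the Markov chain $Z\text{---}Y^\ns\text{---}\hat Z$ continues to hold. Applying \cref{lem:information-Hamming-loss} conditionally on $U=u$ and then averaging over $U$, using concavity of $h(\cdot)$ (Jensen's inequality),
\[
\frac{1}{\ab}\sum_{i=1}^\ab \condmutualinfo{Z_i}{Y^\ns}{U} \;\ge\; 1 - h\!\left(\frac{1}{\ab}\sum_{i=1}^\ab \bPr{\hat Z_i\neq Z_i}\right) \;\ge\; 1 - h(1/15) \;\ge\; \frac{1}{2}.
\]
Multiplying by $\ab$ yields the claim.

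\textbf{Where to be careful.} The only delicate point is bookkeeping the conditioning on $U$: the Hamming-to-mutual-information bound in \cref{lem:information-Hamming-loss} was stated unconditionally, so I apply it for each realisation $U=u$ (which is legitimate because $Z\perp U$) and then invoke Jensen's inequality on $h$ to pass to the average. No further tightness is needed; the factor of $12$ in the hypothesis $\dst/12$ is precisely what makes the expected per-coordinate error less than $h^{-1}(1/2)\approx 0.11$, leaving comfortable slack.
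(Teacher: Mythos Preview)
Your proof is correct and follows essentially the same approach as the paper: reduce total-variation accuracy to a Hamming-loss bound on an estimator of $Z$, then apply \cref{lem:information-Hamming-loss}. The only cosmetic difference is that the paper defines $\hat Z$ as the nearest-neighbor decoder $\arg\min_z \totalvardist{\p_z}{\hat p}$ (and uses $\totalvardist{\p_z}{\p_{z'}}=\frac{2\dst}{\ab}\hamming{z}{z'}$ to convert), whereas you use the coordinate-wise sign decoder; both yield the same Hamming-error bound up to constants, and your handling of the conditioning on $U$ via Jensen is in fact slightly more explicit than the paper's.
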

\begin{proof}
The proof involves relating 
PAC-style guarantees provided by~\cref{eq:def:learning} to an expected Hamming-loss guarantee and then applying the information-loss bound in~\cref{lem:information-Hamming-loss}.
Specifically, let
\[
\hat{Z} \eqdef \underset{z\in \bool^{\ab}}{\arg\!\min} \totalvardist{\p_z}{\hat{\p}(Y^\ns, U)}.
\]
By the triangle inequality,
\[
\totalvardist{\p_{\hat Z}}{\p_Z}\leq \totalvardist{\hat{\p}(Y^\ns, U)}{\p_{\hat Z}}
+\totalvardist{\hat{\p}(Y^\ns, U)}{\p_Z}
\leq 2\totalvardist{\hat{\p}(Y^\ns, U)}{\p_Z}
\]
which yields
\[
\probaOf{\totalvardist{\p_{\hat Z}}{\p_Z}> \frac{\dst}{12}}\leq 1/100,
\]
since $\probaOf{\totalvardist{\hat{\p}(Y^\ns, U)}{\p_Z}> \frac{\dst}{10}}\leq 1/100$ by
our assumption for the estimator $(\Pi, \hat p)$.
Noting that $\totalvardist{\p_{z}}{\p_{z^\prime}}\leq 2\dst$ for every $z, z^\prime \in \bool^\ab$,
we get
\begin{align}
\bEE{\totalvardist{\p_{\hat Z}}{\p_Z}}\leq
\frac {99}{100}\cdot \frac{2\dst}{12}+ \frac 1{100}\cdot 2\dst < \frac{\dst}{5}.
\nonumber
\end{align}
\noindent Next, noting that $\totalvardist{\p_z}{\p_{z^\prime}}=(2\dst/\ab) \sum_{i=1}^\ab \indic{z_i\neq z_i^\prime}$,
the previous inequality yields
\begin{align}
\frac{1}{\ab}\sum_{i=1}^\ab \probaOf{\hat{Z}_i \neq Z_i}<\frac 1 {10}.
\nonumber
\end{align}
The proof is now completed using~\cref{lem:information-Hamming-loss}. 
\end{proof}

Upon combining the previous bound with~\cref{lem:paninski:estimation}, we
obtain the proof of~\cref{theo:interactive:lb:learning}.
Interestingly, the same bound will be useful for the testing problem
as well, and is one of the key components of our lower bound recipes
in this paper. We provide its formal proof first, followed by remarks
on its extension (which will be useful) and heuristics underlying the
formal proof.
\begin{theorem}[Average Information Bound] \label{thm:total}
For $\dst\in (0,1/4]$, let $(Y^\ns,U)$ be the transcript of an interactive protocol
using $\cW$, when the input is generated using $\p_Z$ from~\cref{eq:paninski} with a uniform $Z$.
Then, for every $1\leq t\leq \ns$,
\begin{align*}
\frac
1 \ab \sum_{i=1}^{\ab} \condmutualinfo{Z_i}{Y^t}{U}\le \frac{8t\dst^2}{\ab^2}\cdot \norm{\cW}_{\ast}.
\end{align*}
\end{theorem}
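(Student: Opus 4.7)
The plan is to reduce to a bound on the joint conditional mutual information $I(Z; Y^t\mid U)$ and then apply the chain rule together with a chi-square bound tailored to the Paninski family. First, since the coordinates $Z_1, \ldots, Z_\ab$ are independent under the uniform prior on $\{-1,+1\}^\ab$, subadditivity of (conditional) entropy gives
\[
\sum_{i=1}^\ab I(Z_i; Y^t \mid U) \;=\; \ab - \sum_{i=1}^\ab H(Z_i \mid Y^t, U) \;\leq\; \ab - H(Z\mid Y^t, U) \;=\; I(Z; Y^t \mid U),
\]
so it suffices to show $I(Z; Y^t \mid U) \leq \tfrac{8 t \dst^2}{\ab}\|\cW\|_{\ast}$.

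By the chain rule, $I(Z; Y^t \mid U) = \sum_{s=1}^{t} I(Z; Y_s \mid Y^{s-1}, U)$. I would fix a past $\tau = (y^{s-1}, u)$ and denote $W^\tau \in \cW$ the channel selected by the protocol at step $s$. Upper bounding KL by chi-square with reference distribution $W^\tau \uniform$ (which is always a valid upper bound since the true marginal minimizes the expected KL),
\[
I(Z; Y_s \mid \tau) \;\leq\; \tfrac{1}{\ln 2}\,\E_{Z \mid \tau}\bigl[\chi^2(W^\tau \p_Z \,\|\, W^\tau \uniform)\bigr],
\]
and plugging in the Paninski identity $W\p_z(y) - W\uniform(y) = \tfrac{2\dst}{\ab}\sum_j z_j\,(W(y\mid 2j{-}1) - W(y\mid 2j))$ yields the clean quadratic form
\[
\chi^2(W\p_z \,\|\, W\uniform) \;=\; \tfrac{8\dst^2}{\ab}\, z^\top H(W)\, z.
\]
Summing over $s$ and taking expectations,
\[
I(Z; Y^t \mid U) \;\leq\; \tfrac{8 \dst^2}{\ab \ln 2} \sum_{s=1}^{t} \E_{Z, \tau_s}\bigl[Z^\top H(W^{\tau_s})\, Z\bigr].
\]
The target bound then follows from showing $\E_{Z,\tau_s}[Z^\top H(W^{\tau_s})\,Z] \leq \|\cW\|_\ast$ at each step. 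Using $Z_j^2 = 1$ I decompose $Z^\top H Z = \operatorname{Tr}(H) + 2\sum_{j<k} H_{jk} Z_j Z_k$: the diagonal term contributes $\E[\operatorname{Tr}(H(W^{\tau_s}))] \leq \|\cW\|_\ast$, exactly the noninteractive bound.

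The hard part, and the genuinely interactive obstacle, is the off-diagonal sum $\sum_{j<k}\E[H_{jk}(W^{\tau_s})\,Z_j Z_k]$. Since $\E[Z_jZ_k] = 0$ unconditionally, each such term equals the covariance $\operatorname{Cov}(H_{jk}(W^{\tau_s}), Z_j Z_k)$, which measures how strongly the adaptive channel choice aligns with the sign of $Z_jZ_k$. In the noninteractive case $W$ is independent of $Z$ and these covariances vanish exactly, recovering the sharp bound; in the interactive case both $H(W^{\tau_s})$ and the posterior of $Z$ are past-measurable and entangled. The plan is to control these cross terms by exploiting the PSD structure of $H$ together with the constraint $W^\tau\in\cW$, most likely via induction on $t$ using the bound being proved as the inductive hypothesis to limit how much the past can learn about the pair-signs $Z_jZ_k$ (thereby bounding the relevant covariance), so that the off-diagonal contribution can be absorbed into the diagonal $\|\cW\|_\ast$ bound up to constants.
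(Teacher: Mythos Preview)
Your reduction to the joint mutual information $I(Z;Y^t\mid U)$ is exactly the step the paper \emph{avoids}, and the gap you identify as ``the hard part'' is real and not closed by your plan. In the interactive setting, the posterior correlation matrix $M^{\tau_s}\eqdef \bEEC{ZZ^\top}{\tau_s}$ has unit diagonal but its off-diagonal entries can be substantial; since the channel $W^{\tau_s}$ is chosen after seeing $\tau_s$, an adversarial protocol can align the off-diagonal part of $H(W^{\tau_s})$ with those correlations. Concretely, $\bEEC{Z^\top H(W^{\tau_s}) Z}{\tau_s}=\Tr\!\big(H(W^{\tau_s}) M^{\tau_s}\big)$, and for a PSD $M$ with unit diagonal this can exceed $\Tr H$ by as much as a factor of the rank of $M^{\tau_s}$'s dominant eigenspace. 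Your proposed induction would need to bound the pairwise posterior correlations $\bEEC{Z_jZ_k}{\tau_s}$, but the inductive hypothesis controls only the per-coordinate information $I(Z_i;Y^{s-1})$, and there is no direct relation from these marginals to the $\Theta(\ab^2)$ pairwise quantities without losing a factor of $\ab$.

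The paper sidesteps this completely by \emph{not} passing through $I(Z;Y^t)$. It bounds each $I(Z_i;Y^t)$ directly: by convexity of KL (Jensen over $Z^{-i}$), $I(Z_i;Y^t)\leq \frac{1}{2^\ab}\sum_z \kldiv{\p_z^{Y^t}}{\p_{z^{\oplus i}}^{Y^t}}$, comparing each $\p_z$ only to its single-bit-flip $\p_{z^{\oplus i}}$. After the chain rule and the chi-square bound, the per-step term is $\frac{16\dst^2}{\ab}\,H(W^{Y^{s-1}})_{i,i}$, the \emph{diagonal} entry alone; no cross terms ever appear. Summing over $i$ then gives $\Tr H(W^{Y^{s-1}})=\norm{H(W^{Y^{s-1}})}_\ast\leq\norm{\cW}_\ast$ immediately. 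In short, the key idea you are missing is to condition on $Z^{-i}$ before applying the chi-square bound, which isolates coordinate $i$ and kills the off-diagonal correlations at the source rather than trying to control them after the fact.
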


\begin{proof}
Since $\sum_{i=1}^{\ab} \condmutualinfo{Z_i}{Y^t}{U}\le \max_u \sum_{i=1}^{\ab} \condmutualinfo{Z_i}{Y^t}{U=u}$,
it suffices to establish the bound for every fixed realization of $U$;
we will assume that $U$ is a fixed constant and the protocol $\Pi$ is a
deterministic interactive protocol.  Fix $1\leq t\leq \ns$ and
consider $i\in [\ab]$.  For the distribution in~\cref{eq:paninski} and
$i\in[\ab]$, let
\begin{equation}
  \label{eqn:paninski-message:partial}
\p_{+i}^{Y^\ns} \eqdef \frac1{2^{\ab-1}} \sum_{z:z_i=+1}\p_{z}^{Y^\ns} \quad \text{ and }\quad \p_{-i}^{Y^\ns} \eqdef \frac1{2^{\ab-1}} \sum_{z:z_i=-1}\p_{z}^{Y^\ns}
\end{equation}
be distributions over $\ns$-message transcripts restricting $z_i$ to
be $+1$ or $-1$. Recalling the definition of $\q^{Y^t}$,
from~\cref{eqn:paninski-message}, we can rewrite
\begin{align}
\q^{Y^t}=\frac{\p_{+i}^{Y^t}+\p_{-i}^{Y^t}}{2}.	
\end{align}
By the convexity of KL divergence,
\begin{align*}
\mutualinfo{Z_{i}}{Y^t} = \frac{{\kldiv{\p_{+i}^{Y^t}}{\q^{Y^t}}+\kldiv{\p_{-i}^{Y^t}}{\q^{Y^t}}}}2\le \frac14\Paren{\kldiv{\p_{+i}^{Y^t}}{\p_{-i}^{Y^t}}+\kldiv{\p_{-i}^{Y^t}}{\p_{+i}^{Y^t}}}.
\end{align*}
For $z\in\bool^\ab$, write $z^{\oplus i}$ for $z$ with the $i$th
coordinate flipped.
 Using the convexity of KL divergence
and applying Jensen's inequality to the right-side of the previous
bound, we get
\begin{align}
\mutualinfo{Z_{i}}{Y^t} \le \frac1{2}\Paren{\frac1{2^\ab}\sum_{z\in\bool^\ab}\kldiv{{\p_z^{Y^t}}}{{{\p_{z^{\oplus i}}^{Y^t}}}}}.\label{eqn:convex-broken}
\end{align}
Now for any $z, z'$, by the chain rule for KL divergence, we have
\begin{align}
\kldiv{\p_z^{Y^t}}{\p_{z'}^{Y^t}}
& = \sum_{\ts=1}^{t} \bE{\p_z^{Y^{\ts-1}}}{ \kldiv{\p_z^{Y_\ts\mid
Y^{\ts-1}}}{\p_{z'}^{Y_\ts\mid Y^{\ts-1}}} }.
\end{align}
Next, we note that
\begin{align}
\label{eq:main:relation:towards:general}
\probaDistrOf{\p_z}{Y_\ts=y \mid Y^{\ts-1}}= \probaDistrOf{\p_{z^{\oplus i}}}{Y_\ts=y\mid Y^{\ts-1}} + \frac{2\dst z_{i}}{\ab}\Paren{W^{Y^{\ts-1}}(y\mid 2i-1)-W^{Y^{\ts-1}}(y\mid 2i)}. 
\end{align}
Indeed, this relation holds since for all $z$
\begin{align*}
\probaDistrOf{\p_z}{Y_\ts=y \mid Y^{\ts-1}}
&= \sum_{j=1}^{\ab}\Paren{\p_z(2j-1)W^{Y^{\ts-1}}(y\mid
2j-1)+\p_z(2j)W^{Y^{\ts-1}}(y\mid 2j)}\nonumber\\ &= \sum_{j\neq
i}\Paren{\p_z(2j-1)W^{Y^{\ts-1}}(y\mid
2j-1)+\p_z(2j)W^{Y^{\ts-1}}(y\mid 2j)}\nonumber\\
&\qquad+ \Paren{\frac{1+4\dst z_{i}}{2\ab}W^{Y^{\ts-1}}(y\mid
2i-1)+\frac{1-4\dst z_{i}}{2\ab}W^{Y^{\ts-1}}(y\mid 2i)}\\
&=\sum_{j\neq i} \Paren{\p_{z^{\oplus i}}(2i-1)W^{Y^{\ts-1}}(y\mid
2j-1)+\p_{z^{\oplus i}}(2j)W^{Y^{\ts-1}}(y\mid 2j)}\nonumber \\
&\qquad+ \Paren{\frac{1-4\dst z_{i}}{2\ab}W^{Y^{\ts-1}}(y\mid
2i-1)+\frac{1+4\dst z_{i}}{2\ab}W^{Y^{\ts-1}}(y\mid 2i)}\\
&\qquad+ \frac{2\dst z_{i}}{\ab}\Paren{W^{Y^{\ts-1}}(y\mid
2i-1)-W^{Y^{\ts-1}}(y\mid 2i)}\nonumber \\
&= \probaDistrOf{\p_{z^{\oplus i}}}{Y_\ts=y\mid Y^{\ts-1}}
+ \frac{2\dst z_{i}}{\ab}\Paren{W^{Y^{\ts-1}}(y\mid
2i-1)-W^{Y^{\ts-1}}(y\mid 2i)}\,.
\end{align*}
Using~\eqref{eq:main:relation:towards:general}, we bound
$\kldiv{\p_z^{Y_\ts\mid Y^{\ts-1}}}{\p_{z^{\oplus i}}^{Y_\ts\mid Y^{\ts-1}}}$ as
follows.  Since the KL divergence is bounded by the chi square
distance, we have
\begin{align}
\kldiv{\p_z^{Y_\ts\mid Y^{\ts-1}}}{\p_{z^{\oplus i}}^{Y_\ts\mid Y^{\ts-1}}}
&\le \sum_{y\in\cY}\frac{\Paren{\probaDistrOf{\p_z}{Y_\ts=y \mid
Y^{\ts-1}}-\probaDistrOf{\p_{z^{\oplus i}}}{Y_\ts=y \mid
Y^{\ts-1}}}^2}{\probaDistrOf{\p_{z^{\oplus i}}}{Y_\ts=y \mid
Y^{\ts-1}}}\nonumber\\
& \le \frac{16\dst^2}{\ab}\sum_{y\in\cY}\frac{\Paren{W^{Y^{\ts-1}}(y\mid
2i-1)-W^{Y^{\ts-1}}(y\mid 2i)}^2}{\sum_{x\in[2\ab]}
W^{Y^{\ts-1}}(y\mid x)}\nonumber \\ & = \frac{16\dst^2}{\ab}
H(W^{Y^{\ts-1}})_{i,i},
\label{eq:each_bit_information}
\end{align}
where we used the observation
\[
\probaDistrOf{\p_{z^{\oplus i}}}{Y_\ts=y \mid Y^{\ts-1}} \geq
\frac{1-2\dst}{2\ab} \sum_{x\in[2\ab]} W^{Y^{\ts-1}}(y\mid x)
\geq \frac{1}{4\ab} \sum_{x\in[2\ab]} W^{Y^{\ts-1}}(y\mid x)
.
\]
It follows that
\begin{align*}
\sum_{i=1}^{\ab} \mutualinfo{Z_i}{Y^t}
& \le\frac{8\dst^2}{\ab} \sum_{i=1}^\ab \Paren{ \sum_{\ts=1}^{t} \bE{\p_z^{Y^{\ts-1}}}{
H(W^{Y^{\ts-1}})_{i,i}}} \\ &
= \frac{8\dst^2}{\ab} \sum_{\ts=1}^{t}\Paren{ \bE{\p_z^{Y^{\ts-1}}}{ \sum_{i=1}^\ab
H(W^{Y^{\ts-1}})_{i,i}}} \\ &
= \frac{8\dst^2}{\ab} \sum_{\ts=1}^{t}\Paren{ \bE{\p_z^{Y^{\ts-1}}}{ \norm{H(W^{Y^{\ts-1}})}_{\ast}}} \\
& \le\frac{8t\dst^2}{\ab}\cdot \norm{\cW}_{\ast},
\end{align*}
concluding the proof.
\end{proof}

\begin{remark}[Information bound for each coordinate]\label{r:each_coordinate_info}
  While we have stated the previous result as a bound for average
  information, our proof gives a bound for information
  $\mutualinfo{Z_i}{Y^t}$ about each coordinate contained in the
  message $Y^t$.  Specifically, by~\cref{eq:each_bit_information} we
  get that
\[
\mutualinfo{Z_i}{Y^t} \leq \frac{8\dst^2}{\ab}\sum_{\ts=1}^{t}\bEE{H(W^{Y^{\ts-1}})_{i,i}}.
\]
This stronger form is useful;
see~\cref{ssec:testing:loosebound:erasure}.
\end{remark}  
\begin{remark}[Is this bound tight?]
An examination of the proof above suggests that the only seemingly
weak bound is~\cref{eqn:convex-broken}. In this step, which is an
important ingredient of our proof and perhaps allows us to circumvent
the difficulty faced by prior works, we simplify the conditional
distribution of $Z_i$ given the past $Y^t$ by conditioning
additionally on all the other coordinates $Z^{-i}=(Z_1,\dots, Z_{i-1},
Z_{i+1},\dots, Z_\ab)$. Our thesis is that until the time $t$ when the
$i$th bit of $Z$ is determined by $Y^t$, the difficulty in determining
$Z_i$ using $Y^t$ is not reduced much even when we condition on all
the other bits $Z^{-i}$. This is a driving heuristic for the bound
above.
\end{remark}

We conclude this section by showing how our proof of~\cref{thm:total} implies the claimed result on estimation
under the $\lp[2]$ distance,~\cref{cor:l2:learning}.
\begin{proofof}{{\cref{cor:l2:learning}}}
Note that, by the Cauchy--Schwarz inequality, a lower bound on estimation for distributions over domain $\cX$ to total variation distance $\dst$ implies a lower bound to $\lp[2]$ distance $\dst/\sqrt{\abs{\cX}}$, \ie{} with a square root of the domain size factor loss in the distance parameter. We will use this to derive our lower bounds under $\lp[2]$ distance: first, by the above it is easy to see that for $0<\dst\leq \frac{1}{4\sqrt{2\ab}}$,~\cref{theo:interactive:lb:learning} implies a lower bound of $\bigOmega{\frac{\ab}{\dst^2\norm{\cW}_\ast}}$ users for learning under any set of constraints $\cW$.

However, for larger values of $\dst$, we cannot directly use the result, as $\sqrt{2\ab}\dst > 1/4$ and our result does not apply. However, we can choose in that case a subset $\cX'\subseteq[2\ab]$ of the domain of size
$|\cX| = 2\flr{1/(32\dst^2)}$, and embed our (total variation) lower bound in this domain. One can check that this will indeed result in a lower bound of $\bigOmega{\frac{\ab}{\dst^2\norm{\cW}_\ast}}$ users, for a $\lp[2]$ distance parameter $\dst$.

Combining the two cases yields a general lower bound for $\lp[2]$ estimation under $\cW$; instantiating the bound to $\cW_\priv$ and $\cW_\numbits$ yields~\cref{cor:l2:learning}.
\end{proofof}

\section{Interactive testing under information constraints}\label{sec:testing}

\subsection{The general bound: Proof of~\cref{theo:interactive:lb:testing}}
We proceed as in~\cite{ACT:18:IT1} and derive a lower bound for testing under information constraints
using Le Cam's two-point method. Specifically, let $Z$ be distributed uniformly over
$\bool^\ab$. Note that for any $(\ns, \dst)$-test $(\Pi, T)$
for $(2\ab, \dst)$-identity testing with
transcript $(Y^\ns, U)$, we must have
\[
\frac 12 \probaDistrOf{\uniform^\ns}{T(Y^\ns, U)=0}+\frac 12 \expect{\probaDistrOf{\p_Z^\ns}{T(Y^\ns, U)=1}}\geq
\frac {99}{100},
\]
where $\p_z$ is given by~\cref{eq:paninski}.
It follows that we can find a fixed realization of $U$ for which the same bound holds; thus, there exists a deterministic interactive protocol $\Pi^\prime$  for which the same bound holds. In the remainder of the section, we will assume that our protocol $\Pi$ is deterministic and denote by $\q^{Y^\ns}$ and $\uniform^{Y^\ns}$, respectively, the probabilities
distribution of the transcript under input distribution $\bEE{\p_Z^\ns}$ and $\uniform^{\ns}$.

Using standard relations between Bayesian error for binary hypothesis testing with uniform prior and the total variation distance, along with Pinsker's inequality, we get that $\kldiv{\q^{Y^\ns}}{\uniform^{Y^\ns}} \geq c$ for a constant $c>0$. It remains to bound this KL divergence, which we do after the following remark.

\begin{remark}[Comparison with decoupled chi square bounds]
  Before proceeding, we draw contrast with the {\em decoupled chi square divergence} bound technique developed~\cite{ACT:18:IT1}. \newest{Their} first step was to bound Kullback--Leibler divergence with chi square divergence and then handle the latter using the so-called ``Ingster's method.'' While very powerful for SMP protocols, this technique requires
  us to handle the correlation of the vector $Y^\ns$ directly, which is a formidable task for interactive protocols.
  Below, we proceed by first applying the chain rule to the Kullback--Leibler divergence
to break it into contribution for each sample
and then bounding it by the chi square divergence. As will be seen below, this allows us to work with one sample at a time.
Further, switching to chi square divergence relates distances between distributions to a bilinear form
involving $H(W)$s. Thus, we can relate distances between distributions to the spectrum of $H(W)$,
a relation that was exploited to establish a separation between public- and private-coin protocols in~\cite{ACT:18:IT1}.
But now we need to handle the posterior distribution of the message $Y_t$ given the past $Y^{t-1}$, under the mixture distribution.
\end{remark}

Proceeding with the proof, by the chain rule for Kullback--Leibler divergence, we can write
\begin{align}
  \label{eq:testing:chainrule}
	\kldiv{\q^{Y^\ns}}{\uniform^{Y^\ns}} &= \sum_{t=0}^{\ns-1} \bE{\q^{Y^{t}}}{\kldiv{\q^{Y_{t+1}\mid Y^{t}}}{\uniform^{Y_{t+1}\mid Y^{t}}}}
\end{align}
We now present the key technical component of our testing bound in the result below.
\begin{lemma}[Per-round divergence bound]
  \label{lemma:at-time-t}
  For every $0\leq t\leq \ns-1$, we have
    \begin{align}
    \bE{\q^{Y^{t}}}{\kldiv{\q^{Y_{t+1}\mid Y^{t}}}{\uniform^{Y_{t+1}\mid Y^{t}}}}
    \le  \frac{4(\ln 2) \dst^2}{\ab}\norm{\cW}_{\rm op}\cdot\sum_{i=1}^{\ab} \mutualinfo{Z_i}{Y^t}.
    \end{align}
\end{lemma}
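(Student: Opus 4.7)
My plan is to fix a realization $y^t$ of $Y^t$ and first reduce the inner KL to the corresponding chi-square divergence via $\kldiv{\q^{Y_{t+1}\mid Y^t=y^t}}{\uniform^{Y_{t+1}\mid Y^t}}\le \chisquare{\q^{Y_{t+1}\mid Y^t=y^t}}{\uniform^{Y_{t+1}\mid Y^t}}$. The point of this reduction is that the $\chi^2$ will make the channel information matrix $H(W^{Y^t})$ appear explicitly once I plug in the right form for $\q^{Y_{t+1}\mid Y^t=y^t}$, opening the door to the operator-norm bound $\norm{H(W^{Y^t})}_{\rm op}\le \norm{\cW}_{\rm op}$.

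To expose this form, I would use Bayes' rule to write $\q^{Y_{t+1}\mid Y^t=y^t}(y) = \sum_z \pi(z\mid y^t)\,\p_z^{Y_{t+1}\mid Y^t=y^t}(y)$, where $\pi(\cdot\mid y^t)$ is the posterior of $Z$ under $\q$. Combined with the per-coordinate identity already used in the proof of \cref{thm:total}, namely
\[
\p_z^{Y_{t+1}\mid Y^t=y^t}(y) = \uniform^{Y_{t+1}\mid Y^t}(y) + \frac{\dst}{\ab}\sum_{i=1}^{\ab} z_i\bigl(W^{Y^t}(y\mid 2i-1)-W^{Y^t}(y\mid 2i)\bigr),
\]
and setting $m_i(y^t)\eqdef \bEE{Z_i\mid Y^t=y^t}$ (expectations under $\q$), this yields the key decomposition
\[
\q^{Y_{t+1}\mid Y^t=y^t}(y)-\uniform^{Y_{t+1}\mid Y^t}(y) = \frac{\dst}{\ab}\sum_{i=1}^{\ab}m_i(y^t)\bigl(W^{Y^t}(y\mid 2i-1)-W^{Y^t}(y\mid 2i)\bigr).
\]
Squaring, dividing by $\uniform^{Y_{t+1}\mid Y^t}(y)=\frac{1}{2\ab}\sum_x W^{Y^t}(y\mid x)$, and summing over $y\in\cY$, the cross sums collapse into the bilinear form $m(y^t)^\top H(W^{Y^t})\,m(y^t)$; the operator-norm bound then gives the pointwise estimate
\[
\chisquare{\q^{Y_{t+1}\mid Y^t=y^t}}{\uniform^{Y_{t+1}\mid Y^t}} \le \frac{2\dst^2}{\ab}\,\norm{\cW}_{\rm op}\sum_{i=1}^{\ab} m_i(y^t)^2.
\]

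Finally, taking the expectation over $y^t\sim \q^{Y^t}$ turns the right-hand side into $\sum_i \bEE{\bEE{Z_i\mid Y^t}^2}$. This is exactly the MSE quantity for which \cref{lem:information-MSE-loss} provides the bound $\bEE{\bEE{Z_i\mid Y^t}^2}\le 2\ln 2\cdot \mutualinfo{Z_i}{Y^t}$, and stringing everything together gives the advertised $\frac{4(\ln 2)\dst^2}{\ab}\norm{\cW}_{\rm op}$ constant. The main obstacle, and the reason we cannot simply reuse the learning proof of \cref{thm:total}, is that the pairwise convexity step there (comparing $\p_z$ with $\p_{z^{\oplus i}}$) would only yield a nuclear-norm bound and discard the cross terms across coordinates. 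For testing we must keep those cross terms in order to recover a quadratic form controlled by the \emph{operator} norm, which can be sharper by up to a $\sqrt{\ab}$ factor; doing so forces us to carry the full mixture through the calculation and introduce the posterior means $m_i(y^t)$ as intermediates, whose second moments are precisely what \cref{lem:information-MSE-loss} ties to $\mutualinfo{Z_i}{Y^t}$.
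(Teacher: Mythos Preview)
Your proposal is correct and follows essentially the same route as the paper's proof: bound KL by $\chi^2$, express the difference $\q^{Y_{t+1}\mid Y^t}-\uniform^{Y_{t+1}\mid Y^t}$ via the posterior means $m_i(y^t)=\bEEC{Z_i}{Y^t}$ to obtain the quadratic form $m^\top H(W^{Y^t})\,m$, bound it by $\norm{\cW}_{\rm op}\normtwo{m}^2$, and invoke \cref{lem:information-MSE-loss}. The only minor slip is the coefficient in your per-coordinate identity, which from the construction in~\cref{eq:paninski} should read $\frac{2\dst}{\ab}$ rather than $\frac{\dst}{\ab}$ (cf.~\cref{eq:main:relation:towards:general}); this affects only the absolute constant, not the argument.
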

\begin{proof}
Fix $t$.  As chi-squared divergence upper bounds KL divergence, we have
\begin{align*}
	\bE{\q^{Y^{t}}}{\kldiv{\q^{Y_{t+1}\mid Y^{t}}}{\uniform^{Y_{t+1}\mid Y^{t}}}}
	&\leq \bE{\q^{Y^{t}}}{\chisquare{\q^{Y_{t+1}\mid Y^{t}}}{\uniform^{Y_{t+1}\mid Y^{t}}}} \\
	&= 2\ab\cdot \bE{\q^{Y^{t}}}{\sum_{y\in \cY} \frac{\left(\sum_x W^{Y^{t}}(y\mid x)(\q_{X_{t+1}\mid Y^{t}}(x) - \frac{1}{2\ab}) \right)^2}{\sum_x W^{Y^{t}}(y\mid x)} }.%
\end{align*}
Upon noting that, for all $i\in[\ab]$,
\[
\q_{X_{t+1}\mid Y^{t}}(2i-1)= \frac{1+2\dst \bEEC{Z_i}{Y^{t}} }{2\ab}, \quad
\q_{X_{t+1}\mid Y^{t}}(2i)= \frac{1-2\dst \bEEC{Z_i}{Y^{t}}}{2\ab}, 
\]
we get
\begin{align}
  \bE{\q^{Y^{t}}}{\kldiv{\q^{Y_{t+1}\mid Y^{t}}}{\uniform^{Y_{t+1}\mid Y^{t}}}}
&\leq  \frac{2\dst^2}{\ab} \bE{\q^{Y^{t}}}{ \sum_{y\in \cY} \frac{\left(\sum_{i=1}^k \bEEC{Z_i}{Y^{t}}(W^{Y^{t}}(y\mid 2i-1) - W^{Y^{t}}(y\mid 2i)) \right)^2}{\sum_x W^{Y^{t}}(y\mid x)} } \notag\\
&= \frac{2\dst^2}{\ab} \bE{\q^{Y^{t}}}{ \bEEC{Z }{ Y^{t}}^T H(W^{Y^{t}}) \bEEC{Z}{Y^{t}} }. \label{e:bound:kl}
\end{align}
We \newest{can now bound}\footnote{
  \newest{In view of~\cref{lem:information-MSE-loss}, the right-side of~\cref{e:operator:norm:bound} is large when the mean squared error in estimating $Z$ from $Y^t$ is small.
    Thus, if the divergence in~\cref{e:bound:kl} is large, we should be able
    to determine $Z$ from $Y^t$. 
  }
  }
\begin{align}\label{e:operator:norm:bound}
      \bEEC{Z}{Y^{t}}^T H(W^{Y^{t}}) \bEEC{Z}{Y^{t}} 
      &\leq \norm{H(W^{Y^{t}})}_{\rm op}\cdot\normtwo{\bEEC{Z}{Y^{t}}}^2,
\end{align}
where $\norm{\cdot}_{\rm op}$ denotes the operator norm (or the maximum eigenvalue) of the p.s.d. matrix $H(W^{Y^{t}})$. \smallskip %

\newest{We now take recourse to the information-loss bound in~\cref{lem:information-MSE-loss}
  to relate $\normtwo{\bEEC{Z}{Y^{t}}}^2$ to average information.   
  By combining~\cref{e:bound:kl,e:operator:norm:bound}
  and using~\cref{lem:information-MSE-loss}, we obtain}
\hmargin{It seems that our earlier constant was off by a factor of $4$. Can someone
please check the constant in Lemma 11 and its application here.}
\[
  \bE{\q^{Y^{t}}}{\kldiv{\p^{Y_{t+1}\mid Y^{t}}}{\uniform^{Y_{t+1}\mid Y^{t}}}} \leq \frac{4 (\ln 2)\dst^2}{\ab} \norm{\cW}_{\rm op} \cdot \sum_{i=1}^\ab \mutualinfo{Z_i}{Y^t}\,,
\]
proving the lemma.
\end{proof}

\begin{remark}[Is the bound above tight?]
  A key heuristic underlying our learning bound is the thesis
  that when the information gathered about each coordinate is small,
  the information revealed in the next iteration cannot be too much.
  The bound in~\cref{e:bound:kl} provides a quantitative counterpart for
  this heuristic. 
  The crux of the previous bound is~\cref{e:operator:norm:bound}, which relates
  the Kullback--Leibler divergence to a per-coordinate information quantity
  $\sum_{i=1}^\ab \bEE{\bEE{Z_i\mid Y^{t}}^2}$. As for learning, this enables us to
  circumvent the difficulty in handling the joint correlation between $Z_i$s, when
  conditioned on $Y^t$. In fact, this step can be weak, as we shall see in a later section below.
  Nonetheless, it allows us to relate the distance between message distribution induced by the mixture
   distribution
   and the uniform distribution to the average information quantity of~\cref{thm:total}.
   This connection between learning and testing bounds is interesting in its own right.
\end{remark}

Upon combining~\cref{lemma:at-time-t} with~\cref{eq:testing:chainrule}, summing over $t$,
and using the average information bound of~\cref{thm:total},
we get
\[
 \kldiv{\q^{Y^\ns}}{\uniform^{Y^\ns}} \leq \frac{16 (\ln 2) \dst^4\ns^2}{\ab^2} \norm{\cW}_{\rm op}\norm{\cW}_{\ast},
\]
which gives the desired bound $\ns = \Omega(\ab/(\sqrt{\norm{\cW}_{\rm op} \norm{\cW}_{\ast}}\dst^2))$ for $\kldiv{\q^{Y^\ns}}{\uniform^{Y^\ns}}$ to be $\Omega(1)$. This proves~\cref{theo:interactive:lb:testing}.

\subsection{A bound for $\norm{H(W)}_{\rm op}$}
Next, we record a general property of the matrix $H(W)$, which
is crucial for handling communication constraints but more generally holds for arbitrary information constraints.\amargin{Citing Gershgorin circle theorem.}
\begin{lemma}[Operator-norm bound]
  \label{lemma:general:eigenvalue:bound}
For any channel $W\colon\cX\to\cY$, we have $\norm{H(W)}_{\rm op} \leq 2$.
\end{lemma}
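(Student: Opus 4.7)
The plan is to directly bound the Rayleigh quotient $v^T H(W) v$ for an arbitrary unit vector $v \in \R^\ab$ and show it is at most $2$; since $H(W)$ is p.s.d., this yields $\norm{H(W)}_{\rm op} \leq 2$.

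First I would expand the quadratic form by pulling the sum over $y$ outside. Writing $a_i(y) \eqdef W(y\mid 2i-1) - W(y\mid 2i)$ and $S(y) \eqdef \sum_{x\in[2\ab]} W(y\mid x)$, the definition~\eqref{eq:channel:matrix} gives
\[
v^T H(W) v = \sum_{y\in\cY} \frac{1}{S(y)}\Bigl(\sum_{i=1}^\ab v_i\, a_i(y)\Bigr)^2.
\]
The key step is to control the inner square by a weighted Cauchy--Schwarz, choosing the ``natural'' weights $b_i(y) \eqdef W(y\mid 2i-1) + W(y\mid 2i)$, so that
\[
\Bigl(\sum_{i=1}^\ab v_i\, a_i(y)\Bigr)^2 \leq \Bigl(\sum_{i=1}^\ab v_i^2\, b_i(y)\Bigr)\Bigl(\sum_{i=1}^\ab \frac{a_i(y)^2}{b_i(y)}\Bigr).
\]
The second factor on the right simplifies nicely: since $|a_i(y)| \leq b_i(y)$, we have $a_i(y)^2/b_i(y) \leq b_i(y)$, and hence $\sum_i a_i(y)^2/b_i(y) \leq \sum_i b_i(y) = S(y)$. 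Substituting back cancels the $1/S(y)$ factor cleanly.

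What remains is to swap the order of summation and observe the telescoping: the outer sum over $y$ of $b_i(y) = W(y\mid 2i-1) + W(y\mid 2i)$ equals $2$, since $W(\cdot \mid x)$ is a probability distribution for each $x$. Therefore
\[
v^T H(W) v \leq \sum_{i=1}^\ab v_i^2 \sum_{y\in\cY} b_i(y) = 2\sum_{i=1}^\ab v_i^2 = 2\normtwo{v}^2,
\]
which gives the claim. I do not anticipate a real obstacle here; the only subtle choice is selecting the Cauchy--Schwarz weights $b_i(y)$ so that the $1/S(y)$ denominator disappears while the per-coordinate bound $a_i(y)^2 \leq b_i(y)^2$ can be applied, and this choice is essentially forced by the structure of $H(W)$.
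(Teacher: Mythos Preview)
Your proof is correct and takes a genuinely different route from the paper's. The paper invokes the Gershgorin circle theorem, bounding $\norm{H(W)}_{\rm op}$ by the maximum absolute row sum, and then controls that row sum via the triangle inequality together with $\sum_j |W(y\mid 2j-1)-W(y\mid 2j)| \leq \sum_x W(y\mid x)$. You instead bound the Rayleigh quotient directly with a weighted Cauchy--Schwarz using weights $b_i(y)=W(y\mid 2i-1)+W(y\mid 2i)$, which makes the cancellation of $S(y)$ explicit and reduces the final step to $\sum_y b_i(y)=2$. Both arguments are elementary and of comparable length; yours is self-contained (no external eigenvalue lemma) and perhaps makes more transparent that the constant $2$ arises from each pair $\{2i-1,2i\}$ contributing total mass $2$ across $y$, while the paper's row-sum argument is a shade quicker to write once Gershgorin is cited. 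One small technicality worth a parenthetical remark: when $b_i(y)=0$ you have $a_i(y)=0$ as well, so the ratio $a_i(y)^2/b_i(y)$ is interpreted as $0$ and the Cauchy--Schwarz step remains valid.
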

 \begin{proof}
 By the Gershgorin circle theorem, the eigenvalue of a matrix is at most the largest sum of absolute entries of a row. Now, for any $i\in[\ab]$,
 \begin{align*}
 \norm{H(W)}_{\rm op} 
 &\le \sum_{j=1}^{\ab} \abs{ \sum_{y\in\cY} \frac{(W(y\mid 2i-1)-W(y\mid 2i))(W(y\mid 2j-1)-W(y\mid 2j))}{\sum_{x\in[\ab]} W(y\mid x)}} \\
 &\leq \sum_{y\in\cY} \abs{(W(y\mid 2i-1)-W(y\mid 2i))}\Paren{\frac{\sum_{j=1}^\ab\abs{\Paren{ W(y\mid 2j-1)-W(y\mid 2j)}}}{{\sum_{x\in[\ab]} W(y\mid x)}}} \\
 &\leq \sum_{y\in\cY}|W(y\mid 2i-1)-W(y\mid 2i)|
 \le 2\,,
 \end{align*}
 where in the last step we used the fact that $\sum_{y\in\cY} W(y\mid x) = 1$ for all $x\in[2\ab]$.
 \end{proof}
\subsection{The general bound can be tightened}\label{ssec:testing:loosebound:erasure}
We now present a family of channels for which the general lower bound 
of~\cref{theo:interactive:lb:testing} and the true sample complexity
are a factor $k^{1/4}$ apart. Nonetheless, we can follow the proof of
the lower bound instead of directly applying the statement and
establish the tight lower bounds. In other words, the general proof
methodology we have goes beyond the specific form
in~\cref{theo:interactive:lb:testing}.

Let $\cX=[2\ab]$, $\cY\eqdef\cX\cup\{\bot\}$, and $\eta \in
(0,1)$. The family of \emph{partial erasure} channels
$\cW^{\eta}_{\bot}$ consists of $2\ab$ channels from $\cX$ to $\cY$,
indexed by elements of $\cX$ such that for $x^\ast\in\cX$,   
\[
W_{x^\ast}(y\mid x)=
\begin{cases}
1,& \text{if }y=x=x^\ast,
\\
\eta,& \text{if }y=x \text{ and } x\neq x^\ast,
\\
1-\eta,& \text{if }y=\bot \text{ and } x\neq x^\ast.
\end{cases}
\]
Namely, the channel $W_{x^\ast}$ sends the symbol $x^\ast$ exactly and
erases every other symbol $x\neq x^\ast$ with probability
$1-\eta$. Moreover, the channel matrix $H(W_{x^\ast})$
(see~\cref{eq:channel:matrix}) is diagonal with the $i$th diagonal
entry equal to $1+\eta+ \frac{1-\eta}{2\ab-1}$ for
$x^\ast\notin\{2i-1,2i\}$, and is equal to  $2\eta$ otherwise. For
$\eta=1/\sqrt{\ab}$, we can verify that 
\begin{align}
	2 \le \norm{H(W_x)}_F\le 2\sqrt{2},\quad 2\sqrt{\ab} \le \norm{H(W_x)}_\ast \le
2\sqrt{\ab}+2, \quad 1 \le \norm{H(W_x)}_{\rm op}\le 2.
\label{eqn:norm-bounds-partial-erasure}
\end{align}
Using these quantities to evaluate the lower bounds
in~\cref{table:results:lowerbounds}, we get a lower
bound of $\Omega(k/\dst^2)$ for the sample complexity of testing under SMP public-coin
protocols. We now provide a simple SMP private-coin protocols that
achieves this bound.  

We set all the channels to be $W_1$, the channel that erases all
symbols except symbol $x=1$. This can be converted into an erasure
channel with erasure probability $1-\eta$ by simply converting the
$Y_\tst$s that are equal to $1$ to $\bot$ with probability
$1-\eta$. With this modification, the channel output for the users are independent
and identically distributed,
and $\bPr{Y_\tst=x\mid Y_\tst \neq \bot}=\p(x)$, where $\p$ is the underlying distribution. Therefore,
with $O(\frac{\sqrt{\ab}}{\dst^2}\cdot \frac 1\eta)$ users we can
obtain $O(\frac{\sqrt{\ab}}{\dst^2})$ samples and use a
centralized uniformity test. Upon combining  these bounds we get the
following result.

\begin{proposition}\label{t:noninteractive_erasure} 
The sample complexity of noninteractive $(2\ab,\dst)$-uniformity testing under local constraints
  $\cW^{1/\sqrt{\ab}}_{\bot}$ is  $\Theta(\ab/\dst^2)$ for both public-coin and private-coin protocols.
\end{proposition}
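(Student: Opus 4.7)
The plan is to establish the matching upper and lower bounds $\bigTheta{\ab/\dst^2}$ separately, both essentially as corollaries of results already in hand; the real task is plugging in the correct matrix-norm estimates from~\eqref{eqn:norm-bounds-partial-erasure} and checking one short calculation about the post-processed erasure channel.

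For the lower bound, I would invoke the noninteractive public-coin bound $\bigOmega{\ab/(\dst^2 \norm{\cW}_F)}$ attributed to~\cite{ACT:19:COLT} (mentioned just after~\cref{theo:interactive:lb:testing} and recorded in~\cref{table:results:lowerbounds}) specialized to $\cW = \cW^{1/\sqrt{\ab}}_{\bot}$. The Frobenius-norm estimate $\norm{\cW^{1/\sqrt{\ab}}_{\bot}}_F \leq 2\sqrt{2}$ from~\eqref{eqn:norm-bounds-partial-erasure} is a constant, so the bound simplifies to $\bigOmega{\ab/\dst^2}$. Since private-coin protocols are a subclass of public-coin protocols, the same lower bound automatically applies in both models.

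For the matching upper bound, I would analyze the private-coin scheme sketched in the paragraph preceding the statement. Every user employs the channel $W_1 \in \cW^{1/\sqrt{\ab}}_{\bot}$ and then, as local post-processing, independently maps an output of $1$ to $\bot$ with probability $1-\eta$, where $\eta = 1/\sqrt{\ab}$. A one-line computation using the definition of $W_1$ shows that, for any input distribution $\p$, the post-processed message $Y'$ satisfies $\Pr[Y' = x] = \eta\,\p(x)$ for every $x\in[2\ab]$ and $\Pr[Y' = \bot] = 1-\eta$; hence, conditioned on non-erasure, the $Y'$'s are i.i.d.\ samples from $\p$. Taking $n = C\ab/\dst^2$ for a sufficiently large constant $C$ and applying a Chernoff bound yields $\Omega(\sqrt{\ab}/\dst^2)$ effective samples with high probability, which is enough for the centralized Paninski $\bigTheta{\sqrt{\ab}/\dst^2}$-sample uniformity tester to succeed with the required constant probability.

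The only step that needs any care is the verification that the local post-processing brings the erasure probability to exactly $1-\eta$ \emph{uniformly} over the input symbol---in particular, that the ``un-erasable'' symbol $x=1$, which is special in $W_1$, is brought into line, and that the conditional distribution of the non-erased outputs is precisely $\p$. Modulo this bookkeeping, neither direction presents a genuine obstacle, and the argument goes through cleanly.
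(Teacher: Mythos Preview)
Your proposal is correct and follows essentially the same route as the paper: the lower bound comes from plugging the Frobenius-norm estimate of~\eqref{eqn:norm-bounds-partial-erasure} into the public-coin bound of~\cite{ACT:19:COLT}, and the upper bound is exactly the private-coin $W_1$-plus-post-processing reduction to the centralized tester that the paper sketches in the paragraph before the proposition. There is nothing to add.
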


Next, using the norm bounds
in~\cref{eqn:norm-bounds-partial-erasure} to evaluate
the general lower bound of~\cref{theo:interactive:lb:testing}
gives a lower bound of $\Omega(\ab^{3/4}/\dst^2)$ for sample
complexity of uniformity testing under $\cW$, using interactive
protocols.

Below we will see that this bound is not tight,
showing that the general bound of~\cref{theo:interactive:lb:testing}
can be loose for specific families. Nonetheless,
we show that the proof
of~\cref{theo:interactive:lb:testing} can be adapted easily to
establish the optimal sample complexity $\Theta(\ab/\dst^2)$ for
interactive protocols, matching~\cref{t:interactive_erasure}. This
will be achieved by an improved evaluation for $\bEEC{Z}{Y^{t}}^T
H(W^{Y^{t}})\bEEC{Z}{Y^{t}}$
in~\cref{e:operator:norm:bound}.

\begin{proposition}\label{t:interactive_erasure} 
Interactive $(2\ab,\dst)$-uniformity testing under local constraints
  $\cW^{1/\sqrt{\ab}}_{\bot}$ requires at least $\Omega(\ab/\dst^2)$ users.
\end{proposition}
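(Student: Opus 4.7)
The plan is to follow the scheme of the proof of \cref{theo:interactive:lb:testing}, but to replace the generic operator-norm estimate \cref{e:operator:norm:bound} with one that exploits the diagonal structure of $H(W_{x^\ast})$. After fixing the public coin and applying Le Cam's two-point method against the Paninski family, the chain rule for KL divergence reduces the task to bounding each per-round term $\bE{\q^{Y^t}}{\kldiv{\q^{Y_{t+1}\mid Y^t}}{\uniform^{Y_{t+1}\mid Y^t}}}$, which by the proof of \cref{lemma:at-time-t} is at most $\frac{2\dst^2}{\ab}\bEE{\bEEC{Z}{Y^t}^T H(W^{Y^t})\bEEC{Z}{Y^t}}$.

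The key observation is that $W^{Y^t}=W_{x^\ast(Y^t)}$ for some symbol $x^\ast(Y^t)\in[2\ab]$ chosen by the protocol, so $H(W^{Y^t})$ is diagonal with a single entry of magnitude $\asymp 1$ at the coordinate $i^\ast(Y^t)\in[\ab]$ indexing the pair containing $x^\ast(Y^t)$, and with all other diagonal entries equal to $2\eta=2/\sqrt{\ab}$. Bounding the heavy entry trivially via $|\bEEC{Z_{i^\ast(Y^t)}}{Y^t}|\leq 1$ then gives
\[
\bEEC{Z}{Y^t}^T H(W^{Y^t})\bEEC{Z}{Y^t} \leq 2 + \frac{2}{\sqrt{\ab}}\normtwo{\bEEC{Z}{Y^t}}^2,
\]
which should be compared with the bound $\norm{H(W^{Y^t})}_{\rm op}\normtwo{\bEEC{Z}{Y^t}}^2$ used in the proof of \cref{theo:interactive:lb:testing}; the improvement comes from peeling off the one heavy direction from the quadratic form before bounding. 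By \cref{lem:information-MSE-loss}, $\bEE{\normtwo{\bEEC{Z}{Y^t}}^2}\leq 2\ln 2 \cdot \sum_{i=1}^\ab \mutualinfo{Z_i}{Y^t}$, and by \cref{thm:total} applied with $\norm{\cW}_\ast = O(\sqrt{\ab})$ this is in turn at most $O(t\dst^2/\sqrt{\ab})$. Combining, the per-round KL divergence is $O(\dst^2/\ab + t\dst^4/\ab^2)$.

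Summing over $t=0,\dots,\ns-1$ yields $\kldiv{\q^{Y^\ns}}{\uniform^{Y^\ns}}=O(\ns\dst^2/\ab + \ns^2\dst^4/\ab^2)$. Since Le Cam's method requires this divergence to be at least a positive constant in order for a valid test to exist, both terms must exceed a constant, forcing $\ns=\bigOmega{\ab/\dst^2}$. The most delicate point is the seemingly wasteful estimate $\bEEC{Z_{i^\ast(Y^t)}}{Y^t}^2 \leq 1$ on the single heavy coordinate, which one might hope to sharpen via an information-MSE argument as in \cref{lem:information-MSE-loss}; however, this bound is already tight enough, since the resulting $O(\ns\dst^2/\ab)$ contribution matches the target rate on its own. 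The essential savings over \cref{theo:interactive:lb:testing} thus come from recognizing that each channel in $\cW_{\bot}^{1/\sqrt{\ab}}$ can be informative about only one of the $\ab$ coordinate pairs at a time, a structural property invisible to the generic $\sqrt{\norm{\cW}_{\rm op}\norm{\cW}_\ast}$ estimate.
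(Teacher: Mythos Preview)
Your proof is correct and follows essentially the same strategy as the paper's: both replace the generic bound~\cref{e:operator:norm:bound} by one that exploits the diagonal structure of $H(W_{x^\ast})$, isolating the single ``heavy'' coordinate from the rest. The only difference is in how the heavy coordinate is handled: the paper bounds $\bEEC{Z_{i^\ast(Y^t)}}{Y^t}^2 \leq \norminf{\bEEC{Z}{Y^t}}^2$ and then controls this via the per-coordinate information bound of~\cref{r:each_coordinate_info}, arriving at a single-term estimate $\kldiv{\q^{Y^\ns}}{\uniform^{Y^\ns}}=O(\ns^2\dst^4/\ab^2)$, whereas you simply bound $\bEEC{Z_{i^\ast(Y^t)}}{Y^t}^2\leq 1$ and obtain the two-term estimate $O(\ns\dst^2/\ab+\ns^2\dst^4/\ab^2)$. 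Your shortcut is arguably cleaner here---it sidesteps the delicate issue of commuting $\max_i$ and the expectation over $Y^t$ that the paper's route implicitly invokes---and, as you correctly observe, the extra $O(\ns\dst^2/\ab)$ term already matches the target rate, so nothing is lost.
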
  
\begin{proof}
We proceed as in the proof of~\cref{lemma:at-time-t} until~\cref{e:bound:kl} and then replace the
bound~\cref{e:operator:norm:bound} with a more precise
one. Specifically, we note that different choices of $x$ simply allow
us to permute the diagonal entries of the diagonal matrix
$H(W_x)$. Therefore, we get
\begin{align*}
&\lefteqn{\bEEC{Z}{Y^{t}}^T H(W^{Y^{t}})
\bEEC{Z}{Y^{t}}}\\
&\leq \left(1+\eta+\frac{1-\eta}{2\ab-1}\right)\max_{1\leq i\leq \ab}\bEEC{Z_i}{Y^{t}}^2
+ \eta \Paren{ \normtwo{\bEEC{Z}{Y^{t}}}^2-\max_{1\leq i\leq \ab}\bEEC{Z_i}{Y^{t}}^2 }
\\
&\leq 2\norminf{\bEEC{Z}{Y^{t}}}^2
+\frac{1}{\sqrt{\ab}} \, \normtwo{\bEEC{Z}{Y^{t}}}^2.
\end{align*}
Combining with Lemma~\ref{lem:information-MSE-loss}, we get
\[
\bEE{\bEEC{Z}{Y^{t}}^T H(W^{Y^{t}})
\bEEC{Z}{Y^{t}}}
\leq 4 (\ln 2)\left(\max_{1\leq i\leq \ab}\mutualinfo{Z_i}{Y^t} + \frac 1{\sqrt{\ab}}\sum_{i=1}^\ab \mutualinfo{Z_i}{Y^t}\right).
\]
Next, we take recourse to~\cref{r:each_coordinate_info} to get a bound
for information $\mutualinfo{Z_i}{Y^t}$ about each coordinate.
We have
\[
\mutualinfo{Z_i}{Y^t} \leq \frac{8\dst^2}{\ab}\sum_{j=1}^{t-1}\bEE{H(W^{Y^{j}})_{i,i}},
\]
which when combined with the previous bound yields
\begin{align*}
\bEE{\bEEC{Z}{Y^{t}}^T H(W^{Y^{t}})\bEEC{Z}{Y^{t}}}
&\leq \frac{32(\ln 2) \dst^2}{ \ab}
\sum_{j=1}^{t-1}
\left(
\max_{1\leq i\leq \ab} \bEE{H(W^{Y^{j}})_{i,i}}+
\frac 1{\sqrt{\ab}}
\sum_{i=1}^\ab \bEE{H(W^{Y^{j}})_{i,i}}\right)
\\
&\leq \frac{320 (\ln 2) \dst^2 (t-1)}{ \ab}.
\end{align*}
It follows from~\eqref{e:bound:kl} that
\[
 \kldiv{\q^{Y^\ns}}{\uniform^{Y^\ns}} \leq 320 (\ln 2)\cdot \frac{\dst^4\ns^2}{\ab^2}, 
 \]
 which completes the proof.
\end{proof}
We close by noting that the proof above provides yet another example of
an application where our lower bound technique yields a tight bound; we believe there can be
many more. We note that even in this example we related the distance
to the per-coordinate information $I(Z_i \wedge Y^t)$. It will
be interesting to seek examples where our technique yields a tight
bound without using an upper bound for~\cref{e:bound:kl} in terms of
per-coordinate information quantities.

\subsection{A separation between non-interactive and interactive protocols}\label{ssec:testing:separation}
We will now show that there exists a ``natural'' family of local
constraints for which the sample complexity of interactive protocols
is much smaller than that of noninteractive protocols for
$(2\ab, \dst)$-uniformity testing. To the best of our knowledge, this
is the first example of a separation between interactive and
noninteractive protocols for a basic hypothesis testing problem.     

\paragraph{The search for a suitable family of channels}
To describe how we identify the family $\cW$ that yields the desired
separation, we first revisit the proof of~\cref{lemma:at-time-t}. 
As noted before, the only possibly loose step in the argument
is~\eqref{e:operator:norm:bound}. As we saw
in~\cref{ssec:testing:loosebound:erasure},
this bound can be improved by carefully examining the spectrum of
$H(W)$ for different $W$s. Our goal in this section is to construct
an example where the bound in~\eqref{e:operator:norm:bound} is tight,
but $\norm{\cW}_F$ is maximally separated from
$\sqrt{\norm{W}_\ast\norm{W}_{\rm op}}$. Towards this, a key
observation we have is that for~\eqref{e:operator:norm:bound} to be
tight, we should have a channel family that such that for each
$\bEEC{Z}{Y^t}$, we can find a channel $W$ such that the maximum eigenvalue
of $W$ is roughly $\norm{W}_{\rm op}$ and it corresponds to an
eigenvector
that is aligned with $\bEEC{Z}{Y^t}$.

Specifically, we seek a set of channels $\cW$ for which (a)~there is a large gap between $\norm{\cW}_F$ and $\sqrt{\norm{\cW}_{\rm
op}\norm{\cW}_{\ast}}$; (b)~there is a noninteractive protocol with
sample complexity $O(\ab/\dst^2 \norm{\cW}_F)$; and (c)~there is an
interactive protocol with  sample complexity
$O(k/\dst^2 \sqrt{\norm{\cW}_{\rm op}\norm{\cW}_{\ast}})$.
In view of the heuristic observation above, we seek $\cW$
such that we can assign the maximum eigenvalue of $H(W)$ in any
direction of our choice, by appropriately choosing $W\in \cW$.
In the
previous section, we designed a set of channels that satisfy (a) and (b). However, (c)~did not hold since~\eqref{e:operator:norm:bound} is
not tight as we could only assign the maximum eigenvalue of $H(W)$ to one of
the standard basis vectors, and to no other direction.

We meet the objectives above with 
channels that release membership queries for particular sets of our choice.
We will
also have a leakage component to introduce an eigenspace with a small
eigenvalue to ensure a large gap in different norms of interest to us.  
We now formalize this family below.      

For $\eta \in [0,1)$, $\vecu\in[0,1]^{2\ab}$, and $\cY\eqdef
[2\ab] \cup \{\signA, \signB\}$,  the \emph{leaky-query} channel $W_u^\eta$ for an
input $x\in[2\ab]$ outputs $x$ with probability $\eta$; otherwise it
outputs $\signA$ and $\signB$ with probability $\vecu_x$ and
$1-\vecu_x$ respectively. \newer{For our scheme, we will use $\vecu_x$
that has all the entries for coordinates in a set $S$ equal to $1$
and outside it equal to $0$, corresponding to a membership query for $S$.}
Let $\cW_{\in}^\eta
= \{W_u^\eta:\vecu\in[0,1]^{2\ab}\}$ 
\[
W_{\vecu}(y\mid x)=
\begin{cases}
\eta, & \text{if } y = x, \\
( 1- \eta )\vecu_x, & \text{if } y = \signA, \\
(1 - \eta) (1 - \vecu_x), & \text{if }  y = \signB.
\end{cases}
\]
Throughout this section, we will consider $\eta=1/\sqrt{\ab}$. 
We begin by evaluating the required norms for this family.
\begin{lemma}
\begin{align}
      2 \le\ &\norm{\cW^{1/\sqrt{\ab}}_{\in}}_F \le 2\sqrt{2}, \notag\\
      2\sqrt{\ab} \le\ &\norm{\cW^{1/\sqrt{\ab}}_{\in}}_\ast \leq 2\sqrt{\ab}+2,\label{eq:separation:norms}\\
      \ &\norm{\cW^{1/\sqrt{\ab}}_{\in}}_{\rm op} = 2.  \notag
\end{align}	
\end{lemma}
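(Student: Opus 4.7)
\textbf{Proof plan for the norm bounds on $\cW^{1/\sqrt{\ab}}_{\in}$.}

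The plan is to first compute the channel information matrix $H(W_\vecu^\eta)$ explicitly in closed form, and observe that it decomposes as the sum of a multiple of the identity and a rank-one positive semidefinite correction. Concretely, plugging the channel probabilities into~\cref{eq:channel:matrix} and splitting the sum over $y\in\cY$ into the ``reveal'' contributions ($y=2j-1,2j$) and the ``query'' contributions ($y\in\{\signA,\signB\}$), I expect to obtain
\[
H(W_\vecu^\eta) \;=\; 2\eta\, I_\ab \;+\; (1-\eta)\cdot\frac{2\ab}{S(2\ab-S)}\, \vecv\vecv^T,
\]
where $\vecv\in\R^{\ab}$ has entries $\vecv_i = \vecu_{2i-1}-\vecu_{2i}$ and $S\eqdef\sum_{x\in[2\ab]}\vecu_x$. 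This is the key structural fact, since it reduces spectral questions about $H(W_\vecu^\eta)$ to properties of a single scalar, $\tau_\vecu \eqdef (1-\eta)\frac{2\ab}{S(2\ab-S)}\normtwo{\vecv}^2$.

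The second step is the central inequality $\tau_\vecu \le 2(1-\eta)$ that controls the rank-one correction. Since $\vecu_x\in[0,1]$, for each $i$ the bound $(\vecu_{2i-1}-\vecu_{2i})^2 \le \vecu_{2i-1}+\vecu_{2i}$ holds, and applying the same argument to $1-\vecu_x\in[0,1]$ gives $(\vecu_{2i-1}-\vecu_{2i})^2 \le (1-\vecu_{2i-1})+(1-\vecu_{2i})$. Summing over $i$ yields $\normtwo{\vecv}^2 \le \min(S,2\ab-S) \le \frac{2S(2\ab-S)}{2\ab}$, from which $\tau_\vecu \le 2(1-\eta)$ follows. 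Combined with $\tau_\vecu\ge 0$, this pins down the spectrum of $H(W_\vecu^\eta)$: its eigenvalues are $2\eta+\tau_\vecu$ (with multiplicity one) and $2\eta$ (with multiplicity $\ab-1$).

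The third step is to read off each of the three norms for $\eta=1/\sqrt{\ab}$ from this eigenvalue list. For the operator norm, $\norm{H(W_\vecu^\eta)}_{\rm op} = 2\eta+\tau_\vecu \le 2\eta+2(1-\eta)=2$, with equality at $\vecu=(1,0,1,0,\dots,1,0)$ (for which $S=\ab$, $\vecv=\mathbf{1}_\ab$, and $\tau_\vecu=2(1-\eta)$), giving $\norm{\cW^{1/\sqrt{\ab}}_{\in}}_{\rm op}=2$. For the nuclear norm, positive semidefiniteness gives $\norm{H(W_\vecu^\eta)}_\ast = \tr(H(W_\vecu^\eta)) = 2\eta\ab+\tau_\vecu = 2\sqrt{\ab}+\tau_\vecu \in [2\sqrt{\ab},\,2\sqrt{\ab}+2]$, with the lower bound realized by $\vecu=0$ (so $\tau_\vecu=0$). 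For the Frobenius norm, $\norm{H(W_\vecu^\eta)}_F^2 = (\ab-1)(2\eta)^2 + (2\eta+\tau_\vecu)^2 = 4\eta^2\ab + 4\eta\tau_\vecu+\tau_\vecu^2$; at $\eta=1/\sqrt{\ab}$ this is $4+4\eta\tau_\vecu+\tau_\vecu^2$, which is $4$ at $\tau_\vecu=0$ (giving the lower bound $\norm{\cW^{1/\sqrt{\ab}}_{\in}}_F \ge 2$) and at most $4 + 8\eta(1-\eta)+4(1-\eta)^2 = 8-4\eta^2 \le 8$ by the constraint $\tau_\vecu\le 2(1-\eta)$ (giving $\norm{\cW^{1/\sqrt{\ab}}_{\in}}_F \le 2\sqrt{2}$).

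I expect the main obstacle to be Step~1: carefully handling the four types of output symbols and the two appearances of each $\sum_x W(y\mid x)$ in the denominator, so that the cross terms collapse neatly into $\vecv\vecv^T$. Once the decomposition $H(W_\vecu^\eta) = 2\eta I + c_\vecu\vecv\vecv^T$ is in hand, everything else is a short computation with a single scalar $\tau_\vecu$. The elementary inequality $\normtwo{\vecv}^2\le \min(S,2\ab-S)$ is the only place where the box constraint $\vecu\in[0,1]^{2\ab}$ is used, and it is exactly what is needed to align all three norm estimates with the choice $\eta=1/\sqrt{\ab}$.
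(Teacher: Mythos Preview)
Your plan is correct and follows essentially the same route as the paper: both compute $H(W_\vecu^\eta)=2\eta I_\ab+(1-\eta)\,\delta\delta^T$ (your $c_\vecu\vecv\vecv^T$ is precisely the paper's $\delta(\vecu)\delta(\vecu)^T$), identify the spectrum as $\{2\eta+\tau_\vecu,\,2\eta,\dots,2\eta\}$, and then read off the three norms at $\eta=1/\sqrt{\ab}$. Your argument for $\normtwo{\vecv}^2\le\min(S,2\ab-S)$ via $(a-b)^2\le a+b$ for $a,b\in[0,1]$ is in fact more explicit than the paper, which simply asserts $\normtwo{\delta(\vecu)}^2\le 2$; just be mindful of the degenerate cases $S\in\{0,2\ab\}$ where the denominator vanishes but $\vecv=0$, so the rank-one term is zero and your claim $\tau_\vecu=0$ there is still valid.
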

\begin{proof}
By the definition of $W_\vecu$, we obtain for $i_1,i_2\in[\ab]$ that
\begin{align*}
	&H(W_\vecu)_{i_1, i_2} \\
	&= \sum_{y \in [2\ab] \cup \{\signA, \signB\} }\!\!\!\!\!\!\!\!\!\!\! \tfrac{(W_\vecu(y\mid 2i_1-1)-W_\vecu(y\mid 2i_1))(W_\vecu(y\mid 2i_2-1)-W_\vecu(y\mid 2i_2))}{\sum_{x\in[2\ab]} W_\vecu(y\mid x)}.
\end{align*}
Note that for every $u \in [0,1]^{2\ab}$ and $y \in [2 \ab]$, $\sum_{x\in[2\ab]} W_\vecu(y\mid x) = W_\vecu(y \mid y) = \eta$ and
\begin{align*}
	(W_\vecu(y\mid 2i_1-1)&-W_\vecu(y\mid 2i_1))(W_\vecu(y\mid 2i_2-1)-W_\vecu(y\mid 2i_2)) \\
	&= \begin{cases}
		\eta^2, & \text{if } i_1 = i_2 = \lceil y/2\rceil,\\
		0, & \text{otherwise}.
	\end{cases}
\end{align*}
Further, for $y \in \{\signA, \signB\}$, we have
\begin{align*}
(W_\vecu(y\mid 2i_1-1)&-W_\vecu(y\mid 2i_1))(W_\vecu(y\mid 2i_2-1)-W_\vecu(y\mid 2i_2)) \\
&= (1 - \eta)^2(\vecu_{2i_1-1} - \vecu_{2i_1}) (\vecu_{ 2i_2-1} - \vecu_{ 2i_2}),
\end{align*}
and
\begin{align*}
	\sum_{x\in[2\ab]} W_\vecu(\signA \mid x) &= (1 - \eta)\sum_{x \in [2\ab]}  \vecu_x, \\
	\sum_{x\in[2\ab]} W_\vecu(\signB \mid x) &= (1 - \eta)\sum_{x \in [2\ab]}  ( 1- \vecu_x).
\end{align*}
Upon combining these bounds, we get
	\[
		H(W_\vecu) = 2 \eta I_\ab + (1 - \eta) \delta(u) \delta(u)^T,
	\]
	where for all $i\in[\ab]$,
	\[
	\delta(u)_i = (\vecu_{2i-1} - \vecu_{2i})\sqrt{\frac{2k}{(\normone{u} )(2k - \normone{u})}}.
	\]
From this, it can be verified that $H(W_\vecu)$ has eigenvalues $2\eta
	+ (1  - \eta) \normtwo{\delta(\vecu)}^2$ with multiplicity one
	and $2\eta$ with multiplicity $\ab-1$. Also, we have
$\normtwo{\delta(\vecu)}^2 \le 2$, 
and moreover that equality holds when $|\vecu_{2i-1}	- \vecu_{2i}| = 1$
for all $i\in[\ab]$ and $\sum_{i \in [\ab]}	u(2i) = \ab/2$.
        Setting the value of $\eta$ to be $1/\sqrt{\ab}$ establishes
	the claimed bounds.	
\end{proof}

The following two results establish our claim of a separation between noninteractive schemes and interactive scheme for $(2\ab, \dst)$-uniformity testing under local constraints $\cW^{1/\sqrt{\ab}}_{\in}$.
\begin{proposition} \label{clm:sepa_ni}
Noninteractive $(2\ab,\dst)$-uniformity testing under $\cW^{1/\sqrt{\ab}}_{\in}$ has sample complexity $\Theta(\ab/\dst^2)$, even when the unknown distribution $\p$ has bounded norm $\norminf{\p} \leq 10/\ab$.
\end{proposition}
\begin{proof}
	The lower bound can be shown by plugging
	$\lVert{\cW^{1/\sqrt{\ab}}_{\in}}\rVert_F \le 2\sqrt{2}$ in the lower
	bound for noninteractive schemes obtained
	in~\cite{ACT:18:IT1}
	(see~\cref{table:results:lowerbounds}). Crucially, this lower bound is established by considering the family of distributions given in~\cref{eq:paninski}, and thus still applies under the promise that $\norminf{\p} \leq \frac{10}{\ab}$. For the upper bound,
	note that with 
	probability  $1/\sqrt{\ab}$ we observe a sample from $[2\ab]$
	from the underlying distribution. Ignoring the binary
	responses and using the same argument as that in the previous section, we
	get a (matching) upper bound for the number of samples needed by this 
	private-coin SMP protocol. 
\end{proof}
Our next result provides the last piece to establish our separation,
by showing that interactive protocols can do strictly better than the
noninteractive ones.\footnote{For simplicitly, we only provide a
protocol for the case of $\dst = \Omega(1/\ab^{1/8})$, which is enough
for our purposes. We believe that handling smaller values of the
distance parameter is possible, but would require a more
involved protocol and analysis.} 
\begin{proposition} \label{clm:sepa_i}
For $\dst \geq 8/\ab^{1/8}$,  interactive $(2\ab,\dst)$-uniformity testing under 
  $\cW^{1/\sqrt{\ab}}_{\in}$, when the unknown distribution $\p$ satisfies $\norminf{\p} \leq 10/\ab$, has sample complexity $\Theta(\ab^{3/4}/\dst^2)$.
\end{proposition}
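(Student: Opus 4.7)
For the lower bound I will apply Theorem~\ref{theo:interactive:lb:testing} directly: the preceding lemma establishes $\norm{\cW^{1/\sqrt{\ab}}_{\in}}_{\rm op}\le 2$ and $\norm{\cW^{1/\sqrt{\ab}}_{\in}}_{\ast}\le 2\sqrt{\ab}+2$, so $\sqrt{\norm{\cW^{1/\sqrt{\ab}}_{\in}}_{\rm op}\cdot\norm{\cW^{1/\sqrt{\ab}}_{\in}}_{\ast}} = O(\ab^{1/4})$, and the theorem yields $\ns = \Omega(\ab^{3/4}/\dst^2)$.

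The matching upper bound is the harder direction. I plan a two-phase adaptive protocol that exploits the fact that each channel $W_\vecu^{1/\sqrt\ab}$ has a single ``aligned'' eigendirection (namely $\delta(\vecu)$, with eigenvalue $\approx 2$) and is otherwise almost uninformative. Using public randomness, the server first samples a uniformly random perfect matching $M = \{(a_i,b_i)\}_{i=1}^\ab$ of $[2\ab]$; this randomization is what lets me handle arbitrary $\dst$-far $\p$ rather than just the Paninski family. In a \emph{learning} phase of $n_1 = \Theta(\ab^{3/4}/\dst^2)$ users, each user runs $W_\vecu^{1/\sqrt\ab}$ for an arbitrary fixed $\vecu$ and the server retains only those outputs that fall in $[2\ab]$ (each user contributes such a raw sample with probability $1/\sqrt\ab$), yielding $m = \Theta(\ab^{1/4}/\dst^2)$ i.i.d.\ samples from $\p$. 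From these the server forms $\hat z_i \in \{\pm 1\}$ as the sign of the empirical count difference on each matched pair $(a_i, b_i)$ (ties broken uniformly) and sets $S = \{c_i : i \in [\ab]\}$ with $c_i = a_i$ if $\hat z_i = +1$ and $c_i = b_i$ otherwise, so $|S| = \ab$ and $\uniform(S) = 1/2$. In a \emph{testing} phase of $n_2 = \Theta(\ab^{3/4}/\dst^2)$ further users, each applies the adaptively-chosen channel $W_{\mathbf 1_S}^{1/\sqrt\ab}$; the server rescales the fraction of $\signA$ outputs to obtain an unbiased estimator of $\p(S)$ and rejects when this estimate exceeds $1/2$ by an appropriate threshold.

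The analysis rests on the identity $\p(S)-1/2 = \tfrac{1}{2}\sum_i \hat z_i d_i$, where $d_i = \p(a_i) - \p(b_i)$. Under $\uniform$ this quantity is exactly $0$, so Chebyshev gives acceptance. Under any $\dst$-far $\p$, the key ingredient is the matching-average identity
\[
\bEE_M \sum_i |d_i| \;=\; \frac{1}{2\ab-1}\sum_{a\neq b}|\p(a)-\p(b)| \;\ge\; \Omega(\dst),
\]
which follows from Jensen's bound $\bEE_b|\p(a)-\p(b)|\ge|\p(a)-1/(2\ab)|$ combined with $\sum_a|\p(a)-1/(2\ab)| = 2\dst$. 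For pair $i$ with mass $q_i = \p(a_i)+\p(b_i)$, the binomial sign test with $N_i\sim\mathrm{Bin}(m,q_i)$ samples has accuracy $1/2 + \Theta(|d_i|\sqrt{N_i}/q_i)$ in the relevant regime, so $\bEE[\hat z_i d_i]\gtrsim |d_i|^2\sqrt{m/q_i}$; summing over pairs and applying Cauchy--Schwarz against the weights $q_i$ will give $\bEE\sum_i \hat z_i d_i \gtrsim \dst^2\sqrt{m/\ab}$, matching the Phase 2 standard deviation $\Theta(1/\sqrt{n_2})$ up to a constant. The hypothesis $\dst\ge 8/\ab^{1/8}$ ensures $m\gg 1$ so that this regime-dependent analysis is valid.

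The main technical obstacle will be promoting these expectation bounds to high-probability statements, since with $m\ll\ab$ most pairs see zero or one sample and no individual $\hat z_i$ is reliable. I plan to split the matched pairs into \emph{heavy} ($q_i\gtrsim 1/m$) and \emph{light} ($q_i < 1/m$) groups and control the variance of $\sum_i \hat z_i d_i$ within each: heavy pairs admit standard binomial concentration, while light pairs contribute little noise because $|d_i|\le q_i$. Combined with a Chernoff bound on the Phase~2 binomial estimator, this should deliver the desired $1-1/100$ error guarantee.
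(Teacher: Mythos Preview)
Your lower bound is correct and matches the paper's argument.

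The upper bound, however, has a real gap. The per-pair estimate $\bEE[\hat z_i d_i]\gtrsim |d_i|^2\sqrt{m/q_i}$ tacitly uses $\bEE[\sqrt{N_i}]\asymp\sqrt{mq_i}$, which is valid only when $mq_i\gtrsim 1$. In the light regime $mq_i\ll 1$ one has $\Pr[N_i\geq 1]\approx mq_i$ and $\bEE[\hat z_i d_i\mid N_i=1]=d_i^2/q_i$, so in fact $\bEE[\hat z_i d_i]\approx m\,d_i^2$, which is smaller than your claimed bound by the factor $\sqrt{mq_i}\ll 1$. Your heavy/light split only argues that light pairs ``contribute little noise''; the trouble is that they can carry \emph{all} of the signal. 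Concretely, take a Paninski alternative (say $z=\mathbf 1$): for any matching every pair has $q_i=1/\ab$, and since $m=\Theta(\ab^{1/4}/\dst^2)\ll\ab$ throughout the range $\dst\ge 8/\ab^{1/8}$, every pair is light. Then
\[
\bEE\bigl[\p(S)-\tfrac12\bigr]\;=\;\Theta\!\left(m\sum_i d_i^2\right)\;=\;\Theta\!\left(\frac{m\dst^2}{\ab}\right)\;=\;\Theta\!\left(\ab^{-3/4}\right),
\]
whereas your Phase~2 estimator (a rescaled Bernoulli with mean $\approx 1/2$) has standard deviation $\Theta(1/\sqrt{n_2})=\Theta(\dst/\ab^{3/8})$. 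The signal-to-noise ratio is $\Theta(1/(\dst\,\ab^{3/8}))$, which is $o(1)$ for every $\dst\geq 8/\ab^{1/8}$. Hence your protocol cannot separate the Paninski alternatives from uniform with $O(\ab^{3/4}/\dst^2)$ users; the Cauchy--Schwarz step you outline simply does not apply when no heavy pairs exist.

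The paper takes a different route that sidesteps this obstacle: it lets $S$ be the \emph{small} set of leaked symbols themselves (so $|S|\le N=\Theta(\ab^{1/4}/\dst^2)$), rather than a set of size $\ab$. Two things change. First, the signal becomes $\bEE_{\p}[\p(S)]-\bEE_{\uniform}[\uniform(S)]\gtrsim N\dst^2/\ab$, which is driven by $\normtwo{\p}^2-1/(2\ab)$ (a collision-type quantity) rather than by learned signs. Second, and crucially, the Phase~2 Bernoulli now has mean $\Theta(N/\ab)\ll 1$, so its standard deviation drops to $\Theta(\sqrt{N/(\ab\,\ns)})$ instead of $\Theta(1/\sqrt{\ns})$. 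The resulting signal-to-noise is $\Theta(\dst^2\ns/\ab^{3/4})$, which is $\Omega(1)$ exactly when $\ns=\Omega(\ab^{3/4}/\dst^2)$. (A short preliminary stage ruling out large $\norminf{\p}$ is used to control the variance of $\p(S)$.)
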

\begin{proof}
The lower bound can be obtained by plugging 
the bounds $\lVert\cW^{1/\sqrt{\ab}}_{\in}\rVert_\ast \leq 2\sqrt{\ab}+2$
and $\lVert\cW^{1/\sqrt{\ab}}_{\in}\rVert_{\rm op} = 2$
into~\cref{theo:interactive:lb:testing} (noting that the lower bound instances (\cref{eq:paninski}) satisfy the $\lp[\infty]$ promise).  
For the upper bound, we first present a high-level overview of our scheme, and then provide the details. 

\paragraph{Sketch of the scheme} Observe that when we sample
$X\sim\p$ for a distribution $\p$ over $[2\ab]$, we have $\bEE{\p(X)} = \sum_x \p(x)^2 =\normtwo{\p}^2$.
If
$\totalvardist{\p}{\uniform}\ge\dst$, then by the Cauchy--Schwarz
inequality $\normtwo{\p}^2 \ge (1+4\dst^2)/2\ab$, whereas
$\normtwo{\uniform}^2=1/\ab$. Therefore, when we sample from a
distribution that is far from uniform, the expected probability of the
observed sample is also larger than that under the uniform
distribution. Our protocol exploits this and proceeds in two
stages. In the first stage, we select any channel in
$\cW_\in^{1/\sqrt{\ab}}$ and use it for a fraction of the users.
Let $S$ be the set of outputs from these channels that are in
$[2\ab]$. Now, $\uniform(S)=|S|/2\ab$, but using the motivation above we can hope that, for a $\p$ that is far from
$\uniform$, $\p(S)$ will be noticeably
larger. In the next stage, the remaining players choose
$\vecu_x=\indic{x\in S}$. Now, $\uniform(\signA)=(1-\eta)\uniform(S)$,
and $\p(\signA)=(1-\eta)\p(S)$, and we will perform a binary
hypothesis test to separate these two cases.

\paragraph{Detailed argument} The rest of the argument makes the
intuition above formal. We assume that there are $\ns=C\ab^{3/4}/\dst^2$ users, for some
constant $C>0$ which can be taken to be %
$C = 625$ and $\ab \geq \clg{{9C^2}/{2^{16}}} = 54$. 
Our protocol proceeds as follows:
\begin{enumerate}
	\item For the first $\ns/2$ users, choose the channel
	$W_{\mathbf{0}}$, corresponding to a simple erasure channel with erasure probability $1-\eta$. Gather a set $S\subseteq[2\ab]$ of ``leaked'' samples in
	this stage. 
	\item For the last $\ns/2$ users, choose the channel
	$W_\vecu$ for $\vecu$ corresponding to the indicator vector of
	$S$, in order to estimate $\p(S)$ to an additive accuracy \new{$\frac{\dst^2}{4}\bE{\uniform}{\uniform(S)}$}, via the binary responses. 
\end{enumerate}

\italicparagraph{Step 1.} Let $N$ be the number of ``leaked'' samples in the first stage, namely $N$ symbols
are received without erasure in the first stage.  It can be easily checked that
$\bEE{N}= \ns/(2\sqrt{\ab})$ and $\var(N)<\ns/(2\sqrt{\ab})$. Since our assumptions on $C$ and $\ab$ imply that $\ns \geq 800\sqrt{\ab}$, 
by Chebyshev's inequality we have $\ns/(4\sqrt{\ab}) \le N\le 3\ns/(4\sqrt{\ab})$ with probability at least $99/100$; below we proceed assuming this event holds and will account for the probability of its failure at the end.

Let $S\subseteq[2\ab]$ be the set of symbols of ``leaked'' samples in this step; note that $|S|\le N$ (since
some values may be repeated).
By linearity of expectation, when the samples are generated from $\p$, we have 
\begin{equation*}
	\bEE{\p(S)} = \sum_{i=1}^{2\ab} \p(i) \left( 1 - (1-\p(i))^N \right),
\end{equation*}
and $\uniform(S)=1 - (1-1/(2\ab))^N$. When $\p$ is $\dst$-far from $\uniform$, we have $\normtwo{\p}^2 \geq \frac{1+4\dst^2}{2\ab}$, and 
\begin{align*}
	\bE{\p}{\p(S)}  - \bE{\uniform}{\uniform(S)}  
	= \Paren{1-\frac{1}{2\ab}}^N \sum_{i=1}^{2\ab} \p(i) \Paren{1-\Paren{\frac{2\ab}{2\ab-1}(1-\p(i))}^N} 
	\geq \frac{N\dst^2}{\ab},
\end{align*}
where the last inequality follows by using almost the same analysis as that in the proof of~\cite[Lemma~1]{Paninski:08}. Further, we have\cmargin{Follows from $(1-x)^t \leq 1- \frac{tx}{2}$ for $x\in(0,1)$ and $0< t \leq \frac{1}{2x}$.} 
	\[
		\bE{\uniform}{\uniform(S)} =1 - \Paren{1 - \frac{1}{2\ab}}^N \geq \frac{N}{4\ab}.
	\] since $N \leq \ab$ (which follows from our bound $N \leq 3\ns/(4\sqrt{\ab})$, along with $\dst\geq 8/\ab^{1/8}$ and $\ab \geq 9C^2/2^{16}$), and therefore,\cmargin{Need $N\leq \ab$, \ie{} $\ns \leq (4/3)\ab^{3/2}$. This is satisfied for $\ab \geq 9C^2/2^{16}$, as $\dst\geq 8/\ab^{1/8}$.}
\begin{equation*}
	\bE{\p}{\p(S)} \geq (1 + 3\dst^2/2)\bE{\uniform}{\uniform(S)}
\end{equation*}
whenever $\p$ is $\dst$-far from uniform.

Turning to the variance, we can prove the following bound:
\begin{claim}
  \label{claim:variance:bound}
For any $\p$, we have
\[
    \var_\p[\p(S)] \leq \norminf{\p}\bE{\p}{\p(S)}\,.
\]
\end{claim}
\begin{proof}
Denote by $N_1,\dots,N_\ab$ the sample counts, \ie $N_i$ is the number 
of times element $i$ is seen among the $N$ samples. While $N_i$ is 
distributed as a Binomial with parameters $N$ and $\p(i)$, the 
$N_1,\dots,N_\ab$ are not independent; however, they are negatively 
associated (see, \eg~\cite[Section~2.2]{DubhashiR98}), which we will use 
below.
We start with bounding the expected square:
\begin{align*}
  \bEE{\p(S)^2}  
  &= \sum_{i=1}^\ab\sum_{j=1}^\ab \p(i)\p(j) \bEE{\indic{N_i \geq 1}\indic{N_j \geq 1}} \\
  &= \sum_{i=1}^\ab\p(i)^2 \bEE{\indic{N_i \geq 1}} + 2\sum_{i< j} \p(i)\p(j) \bEE{\indic{N_i \geq 1}\indic{N_j \geq 1}} \\
  &\leq \sum_{i=1}^\ab\p(i)^2 \bEE{\indic{N_i \geq 1}} + 2\sum_{i< j} \p(i)\p(j) \bEE{\indic{N_i \geq 1}}\bEE{\indic{N_j \geq 1}} \\
  &= \sum_{i=1}^\ab\p(i)^2 \bPr{N_i \geq 1} + \Paren{\sum_{i=1}^\ab \p(i)\bPr{N_i \geq 1} }^{\!\!2}- \sum_{i=1}^\ab\p(i)^2 \bPr{N_i \geq 1}^2 \\
  &= \sum_{i=1}^\ab\p(i)^2 \bPr{N_i \geq 1}\bPr{N_i = 0} + \bEE{\p(S)}^2\,,
\end{align*}
where the inequality follows from negative associativity, and we got the third equality by completing the sum $2\sum_{i<j} x_{i,j} = \sum_{i,j} x_{i,j} - \sum_{i} x_{i,i}$. Rewriting, we have
\begin{align*}
  \var[\p(S)]
  &\leq \sum_{i=1}^\ab\p(i)^2 \bPr{N_i \geq 1}\bPr{N_i = 0}\,.
\end{align*}
By upper bounding the last factor by $1$, we then get
\begin{align*}
  \var[\p(S)] &\leq \norminf{\p}\expect{\p(S)}\,,
\end{align*}
concluding the proof.
\end{proof}

When $\p=\uniform$, this gives
$\var_\uniform[\uniform(S)] \leq \frac{1}{2\ab}\bE{\uniform}{\uniform(S)}$. By
Chebyshev's inequality, using the chain of inequalities
\[
\frac{2}{\ab\dst^4\bE{\uniform}{\uniform(S)}} 
	\leq \frac{8}{\dst^4 N} 
	\leq \frac{32\sqrt{\ab}}{\dst^4 \ns} 
	\leq \frac{\ab^{3/4}}{2\dst^2 \ns} 
\]
(where the second is due to our lower bound on $N$, and the third follows from our assumption $\dst \geq 8/\ab^{1/8}$)  
we get that
\begin{align}
	\probaDistrOf{X^\ns\sim \uniform^\ns}{
		\uniform(S) <
		(1+\frac{\dst^2}{2})\bE{\uniform}{\uniform(S)}}
	&\geq 9/10 \label{eq:uniform_stat}
\end{align}
as long as $C \geq 5$.

Now, consider the case where $\p$ is $\dst$-far from
uniform. By Chebyshev's inequality, using~\cref{claim:variance:bound} along with the promise that $\norminf{\p} \leq 10/\ab$,  we get
\begin{align*}
	\probaDistrOf{X^\ns\sim \p^\ns}{\p(S) < (1 + \dst^2)\bE{\uniform}{\uniform(S)}} 
	&\leq \probaDistrOf{X^\ns\sim \p^\ns}{\p(S) < \frac{1 + \dst^2}{1+\frac{3}{2}\dst^2}\bE{\p}{\p(S)}}  \\
	&\leq \frac{(2+3\dst^2)^2}{\dst^4} \frac{10}{\ab\bE{\p}{\p(S)}} \leq \frac{1000}{\dst^4 N} \\
	&\leq \frac{125\ab^{3/4}}{2\dst^2\ns}\,,
\end{align*}
where the last inequality is derived as in the uniform case. 
This implies that, if $\totalvardist{\p}{\uniform}\geq \dst$,
\begin{align}
	\probaDistrOf{X^\ns\sim \p^\ns}{\p(S) > (1 + \dst^2)\bE{\uniform}{\uniform(S)}}
	\geq 9/10,
	\label{eq:nonuniform_stat}
\end{align}
as long as $C \geq 625$.

\italicparagraph{Step 2.} In the second stage, the $\ns/2$ users all choose
the channel $W_u$, where $u\in\{0,1\}^{2\ab}$ is the indicator vector
of $S$. We assume now that conditions~\cref{eq:uniform_stat}
and~\cref{eq:nonuniform_stat}, respectively,
hold under the uniform and nonuniform distribution hypothesis.
The goal of this stage is to distinguish between the two cases $\p(S) <
(1+\frac{1}{2}\dst^2)\bE{\uniform}{\uniform(S)}$ and $\p(S) > (1
+ \dst^2)\bE{\uniform}{\uniform(S)}$, which can be done by estimating $\p(S)$
to an additive $\frac{\dst^2}{4}\bE{\uniform}{\uniform(S)}$ with
probability at least $99/100$ (from $\ns/2$ users). Note that this is 
equivalent to estimating the mean of a Bernoulli random variable $p\geq 
\bE{\uniform}{\uniform(S)}$ to an additive $\dst^2 p/8$ from $\ns/2$ 
observations. Using Chebyshev's inequality, we can check $\ns \geq 
320\ab^{3/4}/\dst^2$ suffices.

\italicparagraph{Overall.}  Accounting for the 3
good events above that hold with probability $99/100$, $9/10$, and $99/100$ respectively,
this protocol is correct by a union
bound with probability at least $22/25$, and involves
$\ns=O(\ab^{3/4}/\dst^2)$ users, as desired. By explicit computation of the Binomial distribution probabilities, repeating the protocol 7 times on independent subsets of samples (\ie{} groups of users) and taking the majority output can then boost the success probability from 22/25 to at least 99/100, only changing the total number of samples by  this constant factor 7.
\end{proof}

\bibliographystyle{IEEEtranS}
\bibliography{bibliography} 

\end{document}